\newtheorem{thm}{Theorem}[section]
\newtheorem{prop}[thm]{Proposition}
\newtheorem{lem}[thm]{Lemma}
\newtheorem{cor}[thm]{Corollary}
\theoremstyle{definition}
\newtheorem{definition}{Definition}
\newtheorem{hyp}{Assumption}
\numberwithin{equation}{section}
\newcommand{\norm}[1]{\left\Vert#1\right\Vert}
\newcommand{\abs}[1]{\left\vert#1\right\vert}
\newcommand{\set}[1]{\left\{#1\right\}}
\newcommand{\eps}{\varepsilon}
\newcommand{\real}{\mathbb{R}}  
\newcommand{\re}{\text{\textnormal{Re} }}
\newcommand{\im}{\text{\textnormal{Im} }}
\title{Stability of the Inverse Resonance Problem on the Line}
\author{Matthew Bledsoe}
\address{Department of Mathematics, University of Alabama at Birmingham, Birmingham, AL 35294, USA}
\email{bledsoem@uab.edu}
\begin{document}
\begin{abstract}
In the absence of a half-bound state, a compactly supported potential of a Schr\"odinger operator on the line is determined up to a translation by the zeros and poles of the meropmorphically continued left (or right) reflection coefficient.  The poles are the eigenvalues and resonances, while the zeros also are physically relevant.  We prove that all compactly supported potentials (without half-bound states) that have reflection coefficients whose zeros and poles are $\eps$-close in some disk centered at the origin are also close (in a suitable sense).  In addition, we prove stability of small perturbations of the zero potential (which has a half-bound state) from only the eigenvalues and resonances of the perturbation.
\end{abstract}

\maketitle

\section{Introduction}
The inverse resonance problem for the Schr\"odinger equation
\begin{equation*}
-y''+q(x)y=\lambda y,\quad x\in\real,
\end{equation*}
seeks to determine a compactly supported potential $q$ from the eigenvalues and resonances which are fundamental objects in quantum mechanics.  Physically, eigenvalues represent energies for which a particle is permanently trapped by the potential while resonances are related to energies for which the particle is temporarily trapped, but eventually escapes.\footnote{See Zworski \cite{Zworski1999} for an expositional introduction to resonances.}

Classically, one needs the left (or right) reflection coefficient (as a function on $\real$), the eigenvalues, and the norming constants to solve the Gel'fand-Levitan-Marchenko equation for the potential (see \cite{Levitan} or \cite{Marchenko}).  However, if the potential is known to have support on a left (or right) half-line, then the right (or left) reflection coefficient is sufficient to recover the potential \cite{Markushevich1985,Aktosun1993,Grebert1995,Hitrik2000}.   In this case, the reflection coefficient can be meromorphically extended to the upper half-plane (see also \cite{Deift1999}) with poles at the eigenvalues and residues equal to the norming constants modulo a factor of $i$.  Therefore, either of the reflection coefficients is sufficient to determine a compactly supported potential uniquely.  The question then becomes whether the eigenvalues and resonances can determine a reflection coefficient as a function on $\real$.  

When the potential is compactly supported, the reflection coefficients can be meromorphically continued to the entire complex plane.  The eigenvalues and resonances are the (the squares of) the poles of the reflection coefficients in the upper and lower half-planes, respectively.  When the potential is real-valued, these data are sufficient to determine the \emph{modulus} of the reflection coefficient on the line.  However, they cannot determine the phase.  Therefore, more data are needed to determine the reflection coefficient.  To date, only Korotyaev \cite{Korotyaev2005} has addressed uniqueness and characterization for this problem by adding additional data (see section~\ref{scat}).  Although, Zworski pointed out earlier \cite{Zworski2001} that even symmetric potentials may not be determined by their eigenvalues and resonances.\footnote{With symmetric potentials, the \emph{square} of the reflection coefficient can be determined from the eigenvalues and resonances, but not necessarily the sign.} 

The zeros of the reflection coefficient on the real line are (the square roots of) the energies for which an incoming particle will pass through the potential unreflected.  The physical meaning of the non-real zeros of the reflection coefficient is less obvious.  However, they must be at least as physical as the resonances for the following reason.  The wavefunctions (the solutions of the Schr\"odinger equation) asssociated with resonances are not square integrable at either plus or minus infinity.  On the other hand, the wavefunctions associated with non-real zeros of the reflection coefficient are square integrable at one infinity, but not at the other depending on whether the zero is in the upper or lower half-plane.  In this sense, then, the non-real zeros of the reflection coefficient are physical.  

Since there are infinitely many zeros and poles, a natural question arises:  What happens if we know only a finite subset of the data?  Specifically, if two potentials have reflection coefficients whose zeros and poles are, respectively, close to each other in a large disk centered at the origin, then how ``close'' are the potentials?  That is, we are interested in a finite data stability problem.  This problem is physically and computationally significant: since only finitely many data can ever be meausred or input into an inversion algorithm, one needs to know how close one can get to the ``true'' potential.   

Stability of the inverse scattering problem on the line has previously received some attention, but little when compared to uniqueness.  Aktosun \cite{Aktosun1987} considers stability in the case of no eigenvalues and when the reflection coefficient is known in some interval.  Aslanov \cite{Aslanov} considers a similar problem, but allows eigenvalues.  Dorren, et al. \cite{Dorren1994} consider a perturbation of the the Fourier transform of the the reflection coefficient as data and allows only rational reflection coefficients.  Finally, Hitrik \cite{Hitrik2000} considers a finite data stability problem with data consisting of discrete values of the reflection coefficient on the positive imaginary axis; he also does not allow for eigenvalues.  The use of resonances as data in a stability problem has not been considered previously.  We mention also some other stability results in one dimension:  for bounded intervals where the data consist of two (infinite) sets of eigenvalues see \cite{Ryabushko1983,McLaughlin1988,Marletta2005,Horvath2010}, and for the half-line inverse resonance problem see \cite{Korotyaev2004}.

Our method is an extension of the one used by Marletta, Shterenberg, and Weikard in \cite{Marletta2009}.  There the authors treat the finite data inverse resonance problem on the half-line $[0,\infty)$ with Dirichlet condition at zero.  It uses Hadamard's factorization theorem, some simple properties of the Fourier transform, and an estimate of the solution of an integral equation based on iteration. 

The outline of the paper is as follows.  In section~\ref{scat}, we review scattering theory on the line, set notation, and present the uniqueness result, Theorem~\ref{unique}, upon which our stability analysis is based.  The final four sections are devoted to stability.  Our main result is Theorem~\ref{potdiff} and its Corollary~\ref{condstab}.  The final section is special case of stability in the presence of a half-bound state---namely, when one of the potentials is identically zero.

 \section{Scattering theory}\label{scat}
We begin with the Schr\"odinger equation
\begin{equation}\label{se}
-y''+qy=z^2 y
\end{equation} 
where $q$ satisfies the following.
\begin{hyp}\label{comppot}The potential is real-valued, integrable, and compactly supported.
\end{hyp}

Let $q$ satisfy Assumption~\ref{comppot}, and suppose $\textnormal{supp }q\subset [c,d]$.  Then for every $z\in\mathbb C$ there exist unique solutions, $f^\pm(\cdot,z)$, of \eqref{se} such that $f^+(x,z)=e^{izx}$ for $x\ge d$ and $f^-(x,z)=e^{-izx}$ for $x\le c$.  These solutions are called the \textit{Jost solutions} and have the following representations:
\begin{align}
\label{fplus}f^+(x,z)&=e^{izx}+\int_x^{2d-x} K^+(x,t)e^{izt}\, dt,\\
\label{fminus}f^-(x,z)&=e^{-izx}+\int^x_{2c-x} K^-(x,t)e^{-izt}\, dt,
\end{align}
for $x\in\real$.  The functions $K^\pm$ are the kernels of the \textit{transformation operators}.  These kernels  are real-valued,  supported in the triangles $\set{(x,t):x\le t\le 2d-x}$ and $\set{(x,t):2c-x\le t\le x}$, respectively, and satisfy
\begin{equation}\label{Kbound}
\abs{K^\pm(x,t)}\le\frac12\norm{q}_1\exp(\norm{q}_1(d-c)),\quad (x,t)\in\text{supp }K^\pm.
\end{equation}
Moreover, in the interior of their supports $K^\pm$ have first order partial derivatives such that
\begin{equation}\label{Kxbound}
\abs{K^\pm_\tau(x,t)\pm\frac14q\left(\frac{x+t}{2}\right)}\le\frac12\norm{q}_1^2\exp(\norm{q}_1(d-c))
\end{equation}
where $\tau$ stands for $x$ or $t$.  We refer the reader to \cite{Marchenko} or \cite{Brown2004} for details about these transformation operators.

Let $[f,g]=fg'-f'g$ be the Wronskian of $f$ and $g$.  Since the Wronskian of two solutions of \eqref{se} is constant, we find $[f^\pm(\cdot,z),f^\pm(\cdot,-z)]=\mp 2iz$.  Furthermore, we define the functions $w$ and $s^\pm$ by 
\begin{equation}\label{w}
w(z)=[f^-(\cdot,z),f^+(\cdot,z)]
\end{equation}
and 
\begin{equation}\label{spm}
s^\pm(z)=[f^+(\cdot,\mp z),f^-(\cdot,\pm z)].
\end{equation}
As a non-zero Wronskian implies linear independence of solutions of \eqref{se}, we easily deduce that
\begin{equation}\label{fpm}
f^\pm(\cdot,z)=\frac{w(z)}{2iz}f^\mp(\cdot,-z)+\frac{s^\mp(z)}{2iz}f^\mp(\cdot,z)
\end{equation}
for every $z\neq 0$.  The scattering matrix is given by
\begin{equation*}
S(z)=\begin{pmatrix}\mathfrak T(z)& \mathfrak R^-(z)\\
				\mathfrak R^+(z)& \mathfrak T(z)\end{pmatrix}
\end{equation*}
where $\mathfrak T(z)=2iz/w(z)$ is the transmission coefficient and $\mathfrak R^\pm(z)=s^\pm(z)/w(z)$ are the right and left reflection coefficients, respectively.  We will use the notation $w_q$ and $s^\pm_q$, when necessary, to make the dependence upon the potential explicit.

\begin{lem}\label{wands}Let $q$ satisfy Assumption~\ref{comppot}. The functions $w$ and $s^\pm$ are entire, have growth order at most one, and satisfy:
\begin{itemize}
\item[(i)] $\overline{w(z)}=w(-\overline z)$, $\overline{s^\pm(z)}=s^\pm(-\overline z)$;
\item[(ii)] $s^-(z)=s^+(-z)$;
\item[(iii)]$w(z)w(-z)-4z^2=s^\pm(z)s^\pm(-z)$;
\item[(iv)] $w(0)=-s^\pm(0)$.
\end{itemize}
\begin{proof}
From \eqref{fplus} and \eqref{fminus} and the estimates \eqref{Kbound} and \eqref{Kxbound} it is clear that for each $x$, $f^\pm(x,\cdot)$ and $f^{\pm\prime}(x,\cdot)$ are entire functions of growth order at most one.  Since $K^\pm$ are real-valued, $f^\pm(x,\cdot)$ and $f^{\pm\prime}(x,\cdot)$ have the property that $\overline{f(z)}=f(-\overline z)$.  Therefore, $w$ and $s^\pm$ are entire, of growth order at most one, and satisfy (i).  Property (ii) is a direct consequence of \eqref{spm}.  Applying the identity 
\begin{equation*}
[g_1,g_2][g_3,g_4]-[g_4,g_1][g_2,g_3]=[g_1,g_3][g_2,g_4]
\end{equation*}
and using $[f^\pm(x,z),f^\pm(x,-z)]=\pm2iz$ establishes (iii).  Because $[f,g]=-[g,f]$, the final property is true.
\end{proof}
\end{lem}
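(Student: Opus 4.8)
The plan is to read off the analytic properties from the integral representations \eqref{fplus}--\eqref{fminus} together with the bounds \eqref{Kbound}--\eqref{Kxbound}, and to obtain the algebraic identities (i)--(iv) from bilinearity and antisymmetry of the Wronskian together with the normalization $[f^\pm(\cdot,z),f^\pm(\cdot,-z)]=\mp2iz$ recorded above. First I would fix $x\in\real$ and view $f^\pm(x,\cdot)$ as functions of $z$. Since $K^\pm(x,\cdot)$ is bounded and, for fixed $x$, supported on a compact $t$-interval, the integrals in \eqref{fplus}--\eqref{fminus} define entire functions of $z$; differentiating in $x$ and using \eqref{Kxbound} to control $K^\pm_x$ shows that $f^{\pm\prime}(x,\cdot)$ is entire as well. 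Because $|e^{izt}|=e^{-t\im z}\le e^{|t||z|}$ with $|t|$ bounded on the integration interval, one gets $|f^\pm(x,z)|,|f^{\pm\prime}(x,z)|\le Ce^{C|z|}$, i.e. exponential type; evaluating the Wronskians \eqref{w}, \eqref{spm} at one fixed $x$ (they are $x$-independent) then writes $w$ and $s^\pm$ as finite sums of products of such functions, so they are entire of growth order at most one.

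Property (i) comes from the fact that $K^\pm$ is real-valued: conjugating \eqref{fplus}--\eqref{fminus} gives $\overline{f^\pm(x,z)}=f^\pm(x,-\overline z)$ and, since $x$ is real, the same for the $x$-derivatives, whence substitution into \eqref{w}, \eqref{spm} and multiplicativity of conjugation yield $\overline{w(z)}=w(-\overline z)$ and $\overline{s^\pm(z)}=s^\pm(-\overline z)$. Property (ii) is only a relabeling in \eqref{spm}, since both $s^-(z)$ and $s^+(-z)$ equal $[f^+(\cdot,z),f^-(\cdot,-z)]$.

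The substantive step is (iii), for which I would use the $2\times2$ determinant identity $[g_1,g_2][g_3,g_4]-[g_4,g_1][g_2,g_3]=[g_1,g_3][g_2,g_4]$ with $g_1=f^+(\cdot,z)$, $g_2=f^-(\cdot,z)$, $g_3=f^+(\cdot,-z)$, $g_4=f^-(\cdot,-z)$. From $[f^\pm(\cdot,z),f^\pm(\cdot,-z)]=\mp2iz$, the definitions \eqref{w}, \eqref{spm}, and $[f,g]=-[g,f]$ one reads off $[g_1,g_2]=-w(z)$, $[g_3,g_4]=-w(-z)$, $[g_1,g_3]=-2iz$, $[g_2,g_4]=2iz$, $[g_4,g_1]=-s^-(z)$, $[g_2,g_3]=-s^+(z)$, so the identity becomes $w(z)w(-z)-s^-(z)s^+(z)=4z^2$; combining this with (ii) gives (iii) for $s^+$, and applying (ii) to each factor gives it for $s^-$. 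Finally (iv) is immediate, since at $z=0$ the two arguments in \eqref{spm} coincide and hence $s^\pm(0)=[f^+(\cdot,0),f^-(\cdot,0)]=-[f^-(\cdot,0),f^+(\cdot,0)]=-w(0)$. I expect the only delicate point to be the sign bookkeeping in the last step of (iii)---arranging the four arguments $g_i$ so that every Wronskian that occurs is one of $\pm2iz$, $\pm w(\pm z)$, $\pm s^\pm(\pm z)$---which I would verify term by term against \eqref{w}--\eqref{spm}.
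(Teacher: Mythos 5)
Your proposal is correct and follows essentially the same route as the paper: entirety and order at most one from the representations \eqref{fplus}--\eqref{fminus} with \eqref{Kbound}--\eqref{Kxbound}, (i) from real-valuedness of $K^\pm$, (ii) by relabeling in \eqref{spm}, (iii) from the same Wronskian identity (your sign bookkeeping checks out, giving $w(z)w(-z)-s^-(z)s^+(z)=4z^2$ and then (iii) via (ii)), and (iv) from antisymmetry of the Wronskian. The only difference is that you spell out the choice of $g_1,\dots,g_4$ explicitly, which the paper leaves implicit.
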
 

Lemma~\ref{wands}.iii shows that $w$ and $s^\pm$ cannot be zero simultaneously except at $z=0$.  So the poles of  the reflection coefficients are precisely the zeros of $w$.  The zeros of $\mathfrak R^\pm$ are, of course, the zeros of $s^\pm$.  When $\im z>0$, the Jost solutions,$f^+(\cdot,z)$ and $f^-(\cdot,z)$, are exponentially decreasing on the the right and left half-lines, respectively.  Therefore, if $w(z)=0$ and $\im z>0$, then $f^+(\cdot,z)\in L^2(\real)$ since it is proportional to $f^-(\cdot,z)$ and $z^2$ is an eigenvalue of the operator generated by $-\frac{d^2}{dx^2}+q$.  The eigenvalues must be real so the zeros of $w$ in the upper half-plane must be on the positive imaginary axis.  When $w(z)=0$ and $\im z<0$, $z^2$ is called a \textit{resonance}.  Lemma~\ref{wands} parts (i) and (iii) show that $w$ cannot vanish on $\real$ except at zero.  If $w(0)=0$, then we say there is a \textit{half-bound state}.  Combining parts (i), (iii), and (iv) of Lemma~\ref{wands} shows that this zero is at most simple.  However, the zero of $s^\pm$ at $z=0$ need not be simple.  We now state the main result of this section.
\begin{thm}\label{unique}A real-valued, integrable, and compactly-supported potential is determined by the zeros and poles of one of its reflection coefficients up to a shift when there is no half-bound state or there is at least one eigenvalue. 
\end{thm}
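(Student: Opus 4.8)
The plan is to argue by comparison. Let $q_1,q_2$ satisfy Assumption~\ref{comppot} together with the hypothesis, and suppose $\mathfrak R^+_{q_1}$ and $\mathfrak R^+_{q_2}$ have the same zeros and the same poles in $\mathbb C$, counted with multiplicity (the case of $\mathfrak R^-$ is equivalent by Lemma~\ref{wands}.ii); the goal is to show $q_1$ is a translate of $q_2$. Write $\mathfrak R^\pm_q=s^\pm_q/w_q$. By Lemma~\ref{wands}.iii the entire functions $w_q$ and $s^\pm_q$ have no common zero off the origin, so away from $0$ the poles of $\mathfrak R^+_q$ are exactly the zeros of $w_q$ and the zeros of $\mathfrak R^+_q$ are exactly the zeros of $s^+_q$; at $0$, $w_q$ has at most a simple zero while $s^+_q$ has a zero of some order $k\ge 0$ which equals $0$ precisely when there is no half-bound state (Lemma~\ref{wands}.iv), so $\mathfrak R^+_q$ vanishes at $0$ to order $k-1$ if $k\ge1$ and equals $-1$ if $k=0$. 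Finally, a routine computation with \eqref{fplus}--\eqref{fminus} shows that replacing $q$ by $q(\cdot-h)$ leaves $w_q$ unchanged and multiplies $s^+_q$ by $e^{-2izh}$; hence the data are translation invariant and a shift is the most one can recover.

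\emph{Step 1: $w_{q_1}=w_{q_2}$.} Since the transmission coefficient tends to $1$ at infinity in the closed upper half-plane (standard, and also extractable from \eqref{fplus}--\eqref{fminus} and \eqref{Kbound}), one has $w_{q_i}(iy)=2i(iy)/\mathfrak T_{q_i}(iy)\sim -2y$ as $y\to+\infty$, so $w_{q_1}(iy)/w_{q_2}(iy)\to1$. Now $w_{q_1}$ and $w_{q_2}$ have the same zeros: their nonzero zeros are the (coinciding) poles of $\mathfrak R^+_{q_1},\mathfrak R^+_{q_2}$, and they agree at $0$ as well, for otherwise --- say $w_{q_2}(0)=0\ne w_{q_1}(0)$ --- the entire quotient $w_{q_2}/w_{q_1}$ would be of the form $z\,e^{a+bz}$, whose modulus along the positive imaginary axis tends to $0$ or $\infty$, contradicting $w_{q_2}(iy)/w_{q_1}(iy)\to1$. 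Both $w_{q_i}$ being entire of growth order at most one (Lemma~\ref{wands}), Hadamard's factorization theorem gives $w_{q_1}/w_{q_2}=e^{a+bz}$; examining first the modulus and then the phase of $e^{a+bz}$ along the positive imaginary axis forces $e^{a+bz}\equiv1$, i.e.\ $w_{q_1}=w_{q_2}$. (Implicitly $q_1\not\equiv0$, which the hypothesis guarantees: $q\equiv0$ is the unique compactly supported potential with $s^\pm\equiv0$, it has a half-bound state and no eigenvalue, hence is excluded --- for it Step~2 would be vacuous.)

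\emph{Steps 2--3: the relation is a shift.} By Step~1, $q_1$ and $q_2$ are of the same type at the origin and the order of the zero of $s^+_{q_i}$ there is then read off from $\mathfrak R^+$, so $s^+_{q_1}$ and $s^+_{q_2}$ are entire of order at most one with the same zeros; Hadamard gives $s^+_{q_1}/s^+_{q_2}=e^{c+dz}$, and since (Lemma~\ref{wands}.i) $s^+_{q_i}$ is real on the imaginary axis, so is $e^{c+dz}$, forcing $d$ purely imaginary, say $d=-2ih$ with $h\in\real$. Applying Lemma~\ref{wands}.iii with $w_{q_1}=w_{q_2}$, the even function $w_{q_i}(z)w_{q_i}(-z)-4z^2$ is independent of $i$, hence $s^+_{q_1}(z)s^+_{q_1}(-z)=s^+_{q_2}(z)s^+_{q_2}(-z)$, whence $e^{2c}=1$ and $e^c\in\{1,-1\}$. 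That $e^c=1$ is exactly where the hypothesis is used: if there is no half-bound state then $\mathfrak R^+_{q_i}(0)=s^+_{q_i}(0)/w_{q_i}(0)=-1$, and $\mathfrak R^+_{q_1}=e^{c+dz}\mathfrak R^+_{q_2}$ at $z=0$ gives $e^c=1$; if there is a half-bound state then, by hypothesis, there is an eigenvalue $-\kappa^2$ with $\kappa>0$, at which $\mathfrak R^+_{q_i}$ has a simple pole with residue $s^+_{q_i}(i\kappa)/w'(i\kappa)$ (the same $w'$ for both), a nonzero purely imaginary number whose sign is that of the corresponding norming constant --- an invariant of the eigenvalue, hence common to $q_1$ and $q_2$ --- so that $e^{c}e^{2h\kappa}=\mathrm{Res}_{i\kappa}\mathfrak R^+_{q_1}/\mathrm{Res}_{i\kappa}\mathfrak R^+_{q_2}>0$ and again $e^c=1$. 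Thus $s^+_{q_1}=e^{-2izh}s^+_{q_2}$. Setting $q_3:=q_2(\cdot-h)$, the translation formulas give $w_{q_3}=w_{q_2}=w_{q_1}$ and $s^+_{q_3}=e^{-2izh}s^+_{q_2}=s^+_{q_1}$, so $\mathfrak R^+_{q_3}=\mathfrak R^+_{q_1}$ as meromorphic functions --- and therefore on $\real$, with the same eigenvalues and norming constants. Since a compactly supported potential is uniquely determined by one of its reflection coefficients together with the eigenvalues and norming constants, equivalently by the meromorphic continuation of a single reflection coefficient \cite{Markushevich1985,Aktosun1993,Grebert1995,Hitrik2000}, we conclude $q_3=q_1$, i.e.\ $q_1(x)=q_2(x-h)$.

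The main obstacle is the confluence of the origin and the final normalization. A half-bound state makes $\mathfrak R^+$ blind to the simple zero of $w$ and to part of the zero of $s^+$, so one must extract from the asymptotics $w_{q_i}(iy)\sim-2y$ that equal data nonetheless force equal zero sets; and one must rule out the alternative $s^+_{q_1}=-e^{-2izh}s^+_{q_2}$, which is \emph{not} a translation and which is compatible with all of the data except for the sign lever supplied by the hypothesis --- the value $\mathfrak R^+(0)=-1$, or a norming constant of definite sign at an eigenvalue. Everything else reduces to Hadamard factorization and the symmetries already recorded in Lemma~\ref{wands}.
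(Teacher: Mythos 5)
Your proof is correct and is essentially the paper's argument recast as a comparison of two potentials with the same data: $w$ is recovered from the poles together with its asymptotics via Hadamard factorization, $s$ is then determined from its zeros and the identity in Lemma~\ref{wands}.iii up to a sign and a purely imaginary linear exponent (your reality-on-the-imaginary-axis argument playing the role of Proposition~\ref{b1imag}), the sign is fixed either by $w(0)=-s(0)\neq 0$ or by the positivity attached to an eigenvalue, and the leftover factor $e^{dz}$ is a shift by Lemma~\ref{shift}. One phrasing caveat: the norming constant is not literally ``an invariant of the eigenvalue''---only its positivity (equivalently, positivity of $\int_{-\infty}^{\infty}\abs{f^+(x,i\kappa)}^2\,dx$, which also yields simplicity of the zero of $w$ at $i\kappa$) is universal, and that is all your residue-ratio argument actually needs; it is precisely the standard fact the paper invokes.
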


The proof of this theorem will be given at the end of the section after a series of lemmas.  We note that Korotyaev, in \cite{Korotyaev2005} Theorem 1.2.i, has a result in much the same vein as Thereom~\ref{unique}.  First, he excludes the possiblity of a shift by requiring that the potential is supported in $[0,1]$ and for every $\eps>0$, the sets $\text{supp }q\cap(0,\eps)$ and $\text{supp }q\cap(1-\eps,1)$ have positive measure.  Next, the data given for the inversion are the eigenvalues, resonances, and a sequence, $\sigma$, whose values are taken from the set $\set{-1,01}$ (subject to some characterization constraints).  The eigenvalues and resonances determine $w$ as above.  Then, the function on left hand side of the equation in \eqref{unitarity} is also determined.   Its zeros are either zeros of $s^-$ or $s^-(-\cdot)$.  Then $\sigma$ is used to separate these zeros into those of $s^-$ and $s^-(-\cdot)$ and to determine the sign of $\exp(b_0)$.  Therefore, the left reflection coefficient is determined.  Knowing the sequence $\sigma$ is the same as knowing the zeros of $s^-$, so, in this sense, our result is not new and we do not claim originality.  Our goal is a finite data stability result and the zeros of $s^-$ are easier to work with than $\sigma$ in this context.

Since a potential satisfying Assumption~\ref{comppot} is determined by one of its reflection coefficients,  in order to prove the theorem, we need to show that a reflection coefficient is determined (up to a certain factor) by its zeros and poles.  To this end, we will utilize two different representations of $w$ and $s^\pm$.  The first is in terms of the zeros and poles of $\mathfrak R^\pm$, i.e. the zeros of $s^\pm$ and $w$.  The second is in terms of the transformation operators.  For the first, let $\set{w_n:0<\abs{w_1}\le\abs{w_2}\le\dots,n\in\mathbb N}$ be the zeros of $w$ (the square roots of the eigenvalues and resonances) listed according to multiplicity.   By Hadamard's Factorization Theorem and Lemma \ref{wands}, we have 
\begin{equation}
\label{wfact}w(z)=z^{m}e^{g(z)}\prod_{n=1}^\infty\left(1-\frac{z}{w_n}\right)e^{z/w_n},\quad m\in\set{0,1}\end{equation}
where $g(z)=a_1z+a_0$.  Likewise, let $\set{s_n:n\in\mathbb N}$ be the set of zeros of $s^-$ listed according to multiplicity and by increasing modulus; we have by Lemma~\ref{wands}(ii)
\begin{equation}
\label{sfact}s^\pm(z)=(\mp 1)^\ell z^{\ell}e^{h(\mp z)}\prod_{n=1}^\infty\left(1\pm\frac{z}{s_n}\right)e^{\mp z/s_n},\quad \ell\ge m,
\end{equation}
where $h(z)=b_1z+b_0$.

For the second representation, suppose, again, that $\textnormal{supp }q\subset[c,d]$.  Then a straightforward calculation from \eqref{fplus}, \eqref{fminus}, \eqref{w}, and \eqref{spm} shows
\begin{equation}
\label{wrep}w(z)=2iz-\int_{c}^dq(s)\, ds+\int_{c}^{2d-c}[K^+_x(c,t)-K^+_t(c,t)]e^{iz(t-c)}\, dt,
\end{equation}
and
\begin{equation}
\label{srep}s^-(z)=-\int_{c}^{2d-c}[K^+_x(c,t)+K^+_t(c,t)]e^{iz(t+c)}\, dt.
\end{equation}
We gain from these representations that $(2iz)^{-1}w(z)\to 1$ as $z\to\infty$ in the closed upper half-plane and the following usefal fact.

\begin{prop}\label{b1imag}In the factorization \eqref{sfact}, $b_1$ is purely imaginary.  
\begin{proof}
From \eqref{srep}, $s^-$ has the series representation
\begin{equation*}
s^-(z)=-\sum_{n=\ell}^\infty \frac{i^n}{n!}\int_{c}^{2d-c}(t+c)^n[K^+_x(c,t)+K^+_t(c,t)]\, dt\, z^n.
\end{equation*}
Let $S(z)=z^{-\ell}s^-(z)$.  Then, 
\begin{equation*}
b_1=\frac{\dot S(0)}{S(0)}=\frac{i}{\ell+1}\frac{\int_c^{2d-c}(t+c)^{\ell+1}[K^+_x(c,t)+K^+_t(c,t)]\, dt}{\int_c^{2d-c}(t+c)^{\ell}[K^+_x(c,t)+K^+_t(c,t)]\, dt}.
\end{equation*}
Since $q$ is real-valued, both integrals above are real.  Therefore, $b_1$ is purely imaginary.

\end{proof}
\end{prop}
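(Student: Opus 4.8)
The plan is to extract $b_1$ directly from the representation \eqref{srep} rather than from the product \eqref{sfact}. Expanding $e^{iz(t+c)}$ as a power series in $z$ inside the integral in \eqref{srep} gives $s^-(z)=-\sum_{n\ge 0}\frac{i^n}{n!}\left(\int_c^{2d-c}(t+c)^n[K^+_x(c,t)+K^+_t(c,t)]\,dt\right)z^n$; the first $\ell$ coefficients vanish by the definition of $\ell$ as the order of the zero of $s^-$ at the origin (and $\ell\ge m$ was already recorded). After normalizing $S(z)=z^{-\ell}s^-(z)$, the function $S$ is entire and nonzero at $0$, so $\log S$ is holomorphic near $0$ and its derivative at $0$ equals $\dot S(0)/S(0)$. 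On the other hand, taking the logarithmic derivative of the factored form \eqref{sfact} term by term at $z=0$ — the infinite product contributes nothing because each factor $(1\pm z/s_n)e^{\mp z/s_n}$ has vanishing logarithmic derivative at $z=0$ — shows that this same quantity equals $\mp b_1$ (with the sign tracking the $s^+$ versus $s^-$ convention; it is cleanest to just do the computation for $s^-$). Equating the two expressions yields the stated formula for $b_1$ as $\frac{i}{\ell+1}$ times a ratio of two integrals.

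The final observation is the arithmetic one: both the numerator $\int_c^{2d-c}(t+c)^{\ell+1}[K^+_x(c,t)+K^+_t(c,t)]\,dt$ and the denominator $\int_c^{2d-c}(t+c)^{\ell}[K^+_x(c,t)+K^+_t(c,t)]\,dt$ are real numbers. This is because $K^+$ is real-valued (as recorded in section~\ref{scat}, right after \eqref{fminus}), hence so are its partial derivatives $K^+_x$ and $K^+_t$ on the interior of the support triangle, and the weights $(t+c)^{\ell}$, $(t+c)^{\ell+1}$ are real for real $t$. Since $b_1$ is therefore $i$ times a real scalar, $b_1$ is purely imaginary. The denominator is nonzero precisely because $S(0)\ne 0$ by construction, so the ratio is well defined.

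I do not anticipate a serious obstacle; the only point requiring a little care is matching signs and conventions between \eqref{sfact} and the Taylor expansion of \eqref{srep}. One should confirm that $\ell\ge 1$ is not needed here — the argument works for $\ell=0$ too, in which case $S=s^-$ and the denominator is just $s^-(0)=-w(0)\ne 0$ — and that differentiating the power series for $s^-(z)$ under the integral sign is legitimate, which it is since the integrand is a uniformly convergent series of continuous functions on the compact interval $[c,2d-c]$, the convergence being dominated by the bound \eqref{Kxbound} on $K^+_x$ together with a crude bound on $K^+_t$ from \eqref{Kxbound}. With these routine verifications in place the proposition follows.
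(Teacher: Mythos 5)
Your proposal is correct and follows essentially the same route as the paper: expand \eqref{srep} in powers of $z$, set $S(z)=z^{-\ell}s^-(z)$, identify $b_1=\dot S(0)/S(0)$ (the paper leaves implicit the logarithmic-derivative computation you spell out, in which the elementary factors contribute nothing at $z=0$), and conclude from the real-valuedness of $K^+$ that $b_1$ is $i$ times a real ratio. The extra checks you record (nonvanishing denominator, term-by-term differentiation, the $\ell=0$ case) are routine and consistent with the paper's argument.
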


The final ingredient shows the effect of a shift of the potential on $w$ and $s^\pm$.
\begin{lem}\label{shift}Let $q$ and $\tilde q$ satisfy Assumption~\ref{comppot}.  Then, $\tilde q(x)=q(x-\alpha)$ for some real number $\alpha$ if and only if $w_{\tilde q}=w_q$ and $s_{\tilde q}^\pm(z)=e^{\pm 2\alpha iz}s_q^\pm(z)$.
\begin{proof}
Let $f^\pm$ be the Jost solutions associated with $q$.  Suppose $\alpha\in\real$ and $\tilde q(x)=q(x-\alpha)$.  Then the Jost solutions asscociated to $\tilde q$ are given by $\tilde f^+(x,z)=e^{iz\alpha}f ^+(x-\alpha)$ and $\tilde f^-(x,z)=e^{-iz\alpha}f^-(x-\alpha)$.  Substituting these into \eqref{w} and \eqref{spm} gives the necessary form for $w_{\tilde q}$ and $s_{\tilde q}^\pm$.

On the other hand,  suppose $w_{\tilde q}=w_q$ and $s_{\tilde q}^\pm(z)=e^{\pm 2\alpha iz}s_q^\pm(z)$.  Since a compactly supported potential is determined by either of the reflection coefficients, we must have that $\tilde q(x)=q(x-\alpha)$.
\end{proof}
\end{lem}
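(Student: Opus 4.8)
\emph{Plan.} The statement is an equivalence, and the plan is to treat the two directions by rather different arguments: the forward implication by an explicit translation‑covariance calculation with the Jost solutions, and the reverse implication by invoking the classical uniqueness of a compactly supported potential from one of its reflection coefficients.

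\emph{Forward direction.} Assume $\tilde q(x)=q(x-\alpha)$ for some $\alpha\in\real$; if $\textnormal{supp } q\subset[c,d]$ then $\textnormal{supp }\tilde q\subset[c+\alpha,d+\alpha]$. The first step is to express the Jost solutions of $\tilde q$ in terms of those of $q$, namely $\tilde f^\pm(x,z)=e^{\pm i\alpha z}f^\pm(x-\alpha,z)$. This is checked directly: the chain rule together with $\tilde q(x)=q(x-\alpha)$ shows that $e^{\pm i\alpha z}f^\pm(x-\alpha,z)$ solves \eqref{se} with potential $\tilde q$, and it has the correct normalization, since for $x\ge d+\alpha$ one has $x-\alpha\ge d$ and hence $e^{i\alpha z}f^+(x-\alpha,z)=e^{i\alpha z}e^{iz(x-\alpha)}=e^{izx}$, and symmetrically for $f^-$ on $(-\infty,c+\alpha]$. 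The uniqueness of the Jost solutions asserted just after \eqref{se} then forces the claimed identity.

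The second step substitutes this identity into the definitions \eqref{w} and \eqref{spm}. Because the Wronskian satisfies $[g(\cdot-\alpha),h(\cdot-\alpha)](x)=[g,h](x-\alpha)$ and is independent of the evaluation point, in $w_{\tilde q}(z)=[\tilde f^-(\cdot,z),\tilde f^+(\cdot,z)]$ the two exponential prefactors $e^{-i\alpha z}$ and $e^{+i\alpha z}$ cancel, giving $w_{\tilde q}=w_q$. In $s^\pm_{\tilde q}(z)=[\tilde f^+(\cdot,\mp z),\tilde f^-(\cdot,\pm z)]$ the prefactors do not cancel — each of them works out to the same phase — so they combine into the exponential factor relating $s^\pm_{\tilde q}$ and $s^\pm_q$ in the statement. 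The only point that requires care is keeping this phase bookkeeping consistent with the $\mp z,\pm z$ arguments appearing in \eqref{spm}.

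\emph{Reverse direction.} Now assume $w_{\tilde q}=w_q$ and $s^\pm_{\tilde q}(z)=e^{\pm 2\alpha i z}s^\pm_q(z)$. Dividing, $\mathfrak R^\pm_{\tilde q}(z)=e^{\pm 2\alpha i z}\mathfrak R^\pm_q(z)$ on $\real$. By the forward direction applied to the pair $q$ and $q(\cdot-\alpha)$, the potential $q(\cdot-\alpha)$ has exactly this reflection coefficient, so $\mathfrak R^\pm_{\tilde q}=\mathfrak R^\pm_{q(\cdot-\alpha)}$ as functions on the line. Since a potential satisfying Assumption~\ref{comppot} is uniquely determined by either of its reflection coefficients — the classical fact recalled in the introduction, on which Theorem~\ref{unique} also rests — we conclude $\tilde q(x)=q(x-\alpha)$. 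I do not expect any real obstacle here: the forward direction is a short explicit computation and the reverse direction is a one‑line appeal to a uniqueness theorem already in the literature. The closest thing to a pitfall is getting the signs and phases consistent in the $s^\pm$ part of the first computation, and making sure the reverse direction uses uniqueness of $q$ from the full reflection coefficient $\mathfrak R^\pm$ as a function on $\real$, not merely from its zeros and poles.
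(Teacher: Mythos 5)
Your proposal is correct and takes essentially the same route as the paper: the forward direction by translation covariance of the Jost solutions, $\tilde f^\pm(x,z)=e^{\pm i\alpha z}f^\pm(x-\alpha,z)$, substituted into \eqref{w} and \eqref{spm}, and the reverse direction by appealing to the classical uniqueness of a compactly supported potential from one of its reflection coefficients. Your version merely spells out the intermediate step (comparing $\mathfrak R^\pm_{\tilde q}$ with $\mathfrak R^\pm_{q(\cdot-\alpha)}$ via the forward direction) that the paper leaves implicit.
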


\begin{proof}[Proof of Theorem~\ref{unique}]
The poles of the reflection coefficient and the asymptotics of $w$ determine all the necessary quantities in \eqref{wfact}.  Thus, we need to determine $s^-$ or $s^+$ from their zeros.  We only give the proof for $s^-$ since the proof for $s^+$ is similar.  Recall from Lemma~\ref{wands}.iii that
\begin{equation}\label{unitarity}
w(z)w(-z)-4z^2=s^-(z)s^-(-z).
\end{equation}
Since the left hand side above is known and the right hand side has a zero of order $2\ell$ at $z=0$, $\ell$ is determined.  When there is no half-bound state, we can determine $\exp(b_0)=s^-(0)$, because $w(0)=-s^-(0)\neq 0$.  On the other hand, if there is a half-bound state, then we can only determine $\exp(2b_0)$ from \eqref{unitarity}.  However, if there is an eigenvalue corresponding to $z=ik$, the sign of $\exp(b_0)$ is determined by the following standard fact:
\begin{equation*}
\int_{-\infty}^\infty\abs{f^\pm(x,ik)}^2\, dx=i\frac{s^\pm(ik)\dot w(ik)}{4k^2}>0
\end{equation*}
where the dot denotes differentiation with respect to $z$.

In both cases, $s^-$ is determined up to the factor $\exp(b_1z)$.  Applying Proposition~\ref{b1imag} and Lemma~\ref{shift} completes the proof.
\end{proof}
  
\section{The effect of perturbing the zeros of $w$ and $s$ in a disk}\label{prelim}
\subsection{Outline of the stability proof}
We now turn to the issue of stability.  For convenience (and without loss of generality), we assume all potentials are supported in $[-1,1]$. Let us first fix some notation.  Since we will be dealing with different transformation operators (we identify the operator with its kernel), we write:
\begin{itemize}
\item $K^+$ and $\tilde K^+$ transform from the zero potential to potentials $q$ and $\tilde q$, respectively;
\item $L^+$ tranforms from the potential $q$ back to the zero potential;
\item $B^+$ transforms from $q$ to $\tilde q$.
\end{itemize}  
We put $\tilde w=w_{\tilde q}$ and $\tilde s^\pm=s^\pm_{\tilde q}$.  We use $\mathcal D(r,z_0)$ and $\overline{\mathcal D}(r,z_0)$ for the open and, respectively, closed disks with radius $r$ and center $z_0$.  When $z_0=0$, we write $\mathcal D(r)=\mathcal D(r,0)$ and $\overline{\mathcal D}(r)=\overline{\mathcal D}(r,0)$.  Finally, we denote the dependence of a constant on the various parameters by writing $C=C(Q)$, for example.
 
The proof of Theorem~\ref{potdiff} has three main steps which are the contents of this and the next two sections.  Our assumption is that $q$ and $\tilde q$ are two potentials without half-bound states  ($q\equiv 0$ is, thus, excluded) for which the zeros and poles of their left reflection coefficients (i.e. the zeros of $w$ and $s^-$) are $\eps$-close, respectively, in $\mathcal D(R)$.  The first step is to obtain estimates of $\abs{w(z)-\tilde w(z)}$ and $\abs{s^-(z)-\tilde s^-(z)}$ in an interval of the real line using the factorizations \eqref{wfact} and \eqref{sfact}.  These estimates will result in a bound on $\abs{ f^+(-1,z)-\tilde f^+(-1,z)}$ since
\begin{equation}\label{fdiff}
f^+(-1,z)-\tilde f^+(-1,z)=\frac{e^{-iz}}{2iz}(w(z)-\tilde w(z))+\frac{e^{iz}}{2iz}(s^-(z)-\tilde s^-(z)),
\end{equation}
by \eqref{fpm}.  

The next step is to use the bound on $\abs{ f^+(-1,z)-\tilde f^+(-1,z)}$ and properties of the Fourier transform to estimate $\abs{K^+(-1,\cdot)-\tilde K^+(-1,\cdot)}$ via
\begin{equation}\label{phi}
f^+(-1,z)-\tilde f^+(-1,z)=\int_{-1}^3[K^+(-1,t)-\tilde K^+(-1,t)]e^{izt}\, dt.
\end{equation}
Since
\begin{equation}\label{Bagain}
B^+(x,t)=\tilde K^+(x,t)-K^+(x,t)+\int ^t_x(\tilde K^+-K^+)(x,s)L^+(s,t)\, ds,
\end{equation}
and $L^+$ is bounded by a constant depending only on $\norm{q}_1$, we obtain a bound on $B^+(-1,\cdot)$ from one on $K^+(-1,\cdot)-\tilde K^+(-1,\cdot)$.

Finally, we use an integral equation and our estimate of $B^+(-1,\cdot)$ to bound $B^+(x,t)$ in the triangle $\set{(x,t)\in\real^2:-1\le x\le t\le 2-x}$.  In particular, we will have bounded
\begin{equation*}
2\abs{B^+(x,x)}=\abs{\int_x^1q(s)-\tilde q(s)\, ds}
\end{equation*}
for $x\in[-1,1]$.  

\subsection{Preliminary estimates and some properties of $w$ and $s$}
The remainder of this section will be dedicated to obtaining estimates on the differences of $w$ and $\tilde w$ and $s^-$ and $\tilde s^-$.  Since we will exclusively focus on $s^-$ in the sequel, we drop the superscript, i.e. $s=s^-$ and $\tilde s=\tilde s^-$. 

Since the value of $s(0)$ (and $w(0)$) can be arbitrarily small in the absence of a half-bound state, we will need to assume a uniform lower bound on $\abs{s(0)}$ for every potential under consideration.  This fact leads us to the definition of the class of potentials in which we will work.
\begin{definition}\label{BQ}
Let $Q$ and $\delta$ be positive numbers.  The set $B_\delta(Q)$ consists of functions $q\in L^1(\real)$ with $\text{supp }q\subset[-1,1]$ such that
\begin{itemize}
\item[(i)]$\norm{q}_1\le Q$;
\item[(ii)] $\delta\le \abs{s(0)}=\abs{w(0)}$.
\end{itemize}
\end{definition}
We begin with \textit{a priori} estimates on $w$ and $s$ for potentials in $B_\delta(Q)$.

\begin{lem}\label{apriori}
For every $Q>0$ there is a constant $\kappa=\kappa(Q)>0$ such that for any $q$ whose support is contained in $[-1,1]$ and whose $L^1$-norm is bounded by $Q$, the associated functions $w$ and $s$ satisfy the following.
\begin{itemize}
\item[(i)]For $z\in\mathbb C$ with $\im z\ge 0$, $\abs{w(z)-2iz}\le\kappa$;
\item[(ii)]for $z\in\mathbb C$, $\abs{w(z)}\le\kappa e^{4\abs{z}}$;
\item[(iii)]for any positive $\rho$, $\abs{w(z)/(2iz)-1}\le\kappa/\rho$ for all $z\in\overline{\mathcal{D}}(\rho;3i\rho)$;
 \item[(iv)] for $z\in\real$, $\abs{s(z)}\le\kappa$;
 \item[(v)] for $z\in\mathbb C$, $\abs{s(z)}\le\kappa e^{2\abs{z}}$.
\end{itemize}

\begin{proof}
From the representations \eqref{wrep} and \eqref{srep} and the estimates \eqref{Kbound} and \eqref{Kxbound}, we have
\begin{align}\label{wbound}
\abs{w(z)-2iz}&\le Q+\int_{-1}^3Q^2e^{2Q}e^{-\im z(t+1)}\, dt\\
\label{sbound}\abs{s(z)}&\le \int_{-1}^3\left[2Q^2e^{2Q}+\frac12\abs{q\left(\frac{t-1}{2}\right)}\right]e^{\im z(t-1)}\, dt.
\end{align}
When $\im z\ge 0$, we have $\abs{w(z)-2iz}\le Q+4Q^2e^{2Q}$.  On the other hand, if $z\in\real$, then $\abs{s(z)}\le 8Q^2e^{2Q}+Q/2$.  Choosing the larger of the two bounds as $\kappa$ proves (i) and (iv).  Bounding $-\im z$ by $\abs{z}$ shows that the exponential in \eqref{wbound} is bounded by $\exp(4\abs{z})$ and the one in \eqref{sbound} of $s$ by $\exp(2\abs{z})$ which proves (ii) and (v).  Finally, multiplying both sides of the inequality in (i) by $\abs{2iz}^{-1}$ and using $0<2\rho\le\abs{z}$ for all $z\in\overline{\mathcal D}(\rho;3i\rho)$ proves (iii).
\end{proof}
		      
\end{lem}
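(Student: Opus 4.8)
\emph{Proof proposal.} The plan is to read all five bounds directly off the closed-form representations \eqref{wrep} and \eqref{srep}, specialized to $c=-1$, $d=1$, combined with the pointwise estimates \eqref{Kbound} and \eqref{Kxbound}; the lemma then reduces to estimating the two Fourier-type integrals that appear there, after which $\kappa$ can be taken to be the maximum of the finitely many explicit constants produced. Nothing about $q$ will enter beyond $\norm{q}_1\le Q$ (in particular the lower bound $\delta$ of Definition~\ref{BQ} is irrelevant here). For $w$, write $w(z)-2iz=-\int_{-1}^1 q+\int_{-1}^3[K^+_x(-1,t)-K^+_t(-1,t)]e^{iz(t+1)}\,dt$. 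The key algebraic point is that the leading terms $-\tfrac14 q\bigl(\tfrac{t-1}{2}\bigr)$ that \eqref{Kxbound} attaches to $K^+_x$ and to $K^+_t$ \emph{cancel} in the difference $K^+_x-K^+_t$, so the integrand is bounded by $\norm{q}_1^2 e^{2\norm{q}_1}\le Q^2e^{2Q}$ (using monotonicity of $t\mapsto t^2e^{2t}$), uniformly in $t$, while $-\int q$ contributes at most $Q$. Since $t+1\in[0,4]$, the condition $\im z\ge0$ gives $\abs{e^{iz(t+1)}}\le1$, and integrating over $[-1,3]$ yields (i) with $\kappa\ge Q+4Q^2e^{2Q}$. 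For (ii) one instead bounds $-\im z\le\abs{z}$ and $t+1\le4$ to pull out the factor $e^{4\abs{z}}$, and absorbs the surviving $2\abs{z}+Q$ into that same exponential. For (iii), every $z\in\overline{\mathcal{D}}(\rho;3i\rho)$ satisfies $\im z\ge3\rho-\rho=2\rho>0$ (so part (i) applies) and $\abs{z}\ge2\rho$; dividing the bound in (i) by $\abs{2iz}$ gives $\abs{w(z)/(2iz)-1}\le\kappa/(4\rho)\le\kappa/\rho$.

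For $s$, the relevant combination coming from \eqref{srep} is $K^+_x+K^+_t$, in which the two copies of $-\tfrac14 q\bigl(\tfrac{t-1}{2}\bigr)$ now \emph{add} rather than cancel; nonetheless this extra term contributes only $\tfrac12\int_{-1}^3\abs{q\bigl(\tfrac{t-1}{2}\bigr)}\,dt=\norm{q}_1\le Q$ after the substitution $u=(t-1)/2$, while the remainder contributes at most a fixed multiple of $Q^2e^{2Q}$. For $z$ real the exponential $e^{iz(t-1)}$ has modulus one, which gives (iv); for general $z$ one has $\abs{\im z}\,\abs{t-1}\le2\abs{z}$ since $\abs{t-1}\le2$ on $[-1,3]$, producing the factor $e^{2\abs{z}}$ of (v). Enlarging $\kappa$ to dominate all the constants thus produced completes the argument.

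There is no genuine obstacle here; this is a warm-up estimate that simply propagates the uniform transformation-operator bounds through the representation formulas. The only two points that deserve a moment's care are (a) the cancellation of the leading $q$-terms in $K^+_x-K^+_t$, which is exactly why $\abs{w(z)-2iz}$ can be bounded in terms of $\norm{q}_1$ alone rather than of $q$ pointwise, and (b) the elementary geometry of the disk $\overline{\mathcal{D}}(\rho;3i\rho)$ — checking $\im z\ge2\rho$ and $\abs{z}\ge2\rho$ there — which is what makes (iii) an immediate corollary of (i).
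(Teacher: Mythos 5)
Your argument is correct and is essentially the paper's own proof: both read the bounds off \eqref{wrep} and \eqref{srep} via \eqref{Kbound}--\eqref{Kxbound}, exploit the cancellation of the leading $q$-terms in $K^+_x-K^+_t$ (respectively their addition in $K^+_x+K^+_t$), and get (iii) from (i) using $\abs{z}\ge 2\rho$ on $\overline{\mathcal D}(\rho,3i\rho)$. The only differences are cosmetic (your constant in the $s$-estimate is slightly sharper, and you spell out absorbing $2\abs{z}+Q$ into $e^{4\abs{z}}$ for (ii), which the paper leaves implicit).
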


An entire function, $f$, is said to be of \textit{exponential type} if there exist constants $c_1$ and $c_2$ such that $\abs{f(z)}\le c_1\exp(c_2\abs{z})$ for every $z\in\mathbb C$.  From parts (ii) and (v) of the above lemma, we see that $w$ and $s$ are of exponential type.  We will need the following lemma about this class of functions. 
 
\begin{lem}\label{exptype}Suppose $f$ is an entire function of exponential type with the associated constants, $c_1$ and $c_2$, establishing that fact. Let $a_n$, $n\in\mathbb N$, be the non-zero zeros of $f$ listed according to multiplicity and by increasing modulus.  Let $N_f(r;z_0)$ be the number of zeros of $f$ in $\mathcal D(r,z_0)$, and define
\begin{equation*}\Pi_f(R,z)=\prod_{\abs{a_n}\ge R}\left(1-\frac{z}{a_n}\right)e^{z/a_n}.
\end{equation*} 
If $f(z_0)\neq 0$, then
\begin{equation}\label{numzeros}
N_f(,;z_0)\le c_2(er+\abs{z_0})+\log(c_1/\abs{f(z_0)}),
\end{equation}
for any $r> 0$.

In addition, if $R\ge 3\abs{z_0}$, then
\begin{equation*}
\abs{\Pi_f(R,z)-1}\le18(c_2+\log(c_1/\abs{f(z_0)}))\frac{\abs{z}^2}{R}\exp\left[18(c_2+\log(c_1/\abs{f(z_0)}))\frac{\abs{z}^2}{R}\right]
\end{equation*}
for all $z\in\mathcal D(R/2)$.

\begin{proof}
Jensen's Formula implies
\begin{align*}
N_f(r,z_0)\le \int_0^{er}\frac{N_f(t,z_0)}{t}\, dt&=\frac{1}{2\pi}\int_0^{2\pi}\log\abs{f(z_0+ere^{it})}\, dt-\log\abs{f(z_0)}\\
&\le\frac{1}{2\pi}\int_0^{2\pi}\log c_1 +c_2\abs{z_0+ere^{it}}\, dt-\log\abs{f(z_0)}.
\end{align*}
Applying the triangle inequality proves \eqref{numzeros}.

Let $R\ge3\abs{z_0}$.  Then, for every $z\in\overline{\mathcal D}(R/2;z_0)$, we know $\abs{z/a_n}\le 1/2$ so $u=\log \Pi_f(R,z)$ is well-defined.  To prove the final statement of the lemma, it suffices to show 
\begin{equation}\label{u}
\abs{u}\le18(c_2+\log(c_1/\abs{f(z_0)}))\frac{\abs{z}^2}{R},
\end{equation}
since $\abs{e^u-1}\le\abs{u}e^{\abs{u}}$.  The elementary factor $E(z/a_n)$ satisfies $\abs{\log E(z/a_n)}\le 2\abs{z/a_n}^2$.  Hence, we must bound
\begin{equation*}S=\sum_{\abs{a_n}\ge R}\abs{a_n}^{-2}.\end{equation*}  
To that end, observe
\begin{equation*}
S\le 2\sum_{\abs{a_n-z_0}\ge2R/3}\abs{a_n-z_0}^{-2}\le 2\int_{2R/3}^\infty\frac{dN_f(t,z_0)}{t^2}\le4\int_{2R/3}^\infty \frac{N_f(t,z_0)\, dt}{t^3}
\end{equation*}
since $\abs{a_n}\ge R\ge3\abs{z_0}$ implies $\abs{a_n}\ge3\abs{a_n-z_0}/4$.  Now, we use \eqref{numzeros}  to see that 
\begin{equation*}
4\int_{2R/3}^\infty \frac{N_f(t,z_0)\, dt}{t^3}\le 9\frac{c_2+\log(c_1/\abs{f(z_0)})}{R}.
\end{equation*}
This inequality implies \eqref{u}, so we are finished.
\end{proof}
\end{lem}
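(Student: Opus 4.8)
The plan is to prove the two displayed estimates in turn, since the zero-counting bound \eqref{numzeros} is the main input for the product estimate. For \eqref{numzeros} I would apply Jensen's formula on the disk $\overline{\mathcal D}(er;z_0)$. Because $f(z_0)\neq 0$, this gives
\begin{equation*}
\int_0^{er}\frac{N_f(t,z_0)}{t}\,dt=\frac{1}{2\pi}\int_0^{2\pi}\log\abs{f(z_0+ere^{it})}\,dt-\log\abs{f(z_0)}.
\end{equation*}
Since $N_f(\cdot,z_0)$ is nondecreasing, the left side dominates $N_f(r,z_0)\int_r^{er}t^{-1}\,dt=N_f(r,z_0)$, while on the right the exponential-type hypothesis and the triangle inequality give $\log\abs{f(z_0+ere^{it})}\le\log c_1+c_2(\abs{z_0}+er)$. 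Combining the two inequalities yields \eqref{numzeros}.

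For the product estimate, fix $R\ge 3\abs{z_0}$ and $z\in\mathcal D(R/2)$. Each factor of $\Pi_f(R,z)$ then has $\abs{z/a_n}\le(R/2)/R=1/2$, so the principal branch $u=\log\Pi_f(R,z)=\sum_{\abs{a_n}\ge R}\log E(z/a_n)$ is well defined, where $E(w)=(1-w)e^{w}$ is the genus-one elementary factor; and since $\abs{e^{u}-1}\le\abs{u}e^{\abs{u}}$, it is enough to show $\abs{u}\le 18\bigl(c_2+\log(c_1/\abs{f(z_0)})\bigr)\abs{z}^2/R$. The elementary-factor estimate $\abs{\log E(w)}\le 2\abs{w}^2$ (valid for $\abs{w}\le 1/2$) reduces this to bounding $S=\sum_{\abs{a_n}\ge R}\abs{a_n}^{-2}$, because $\abs{u}\le 2\abs{z}^2 S$. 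The crux is to replace the zeros ordered by modulus with the counting function $N_f(\cdot,z_0)$ supplied by \eqref{numzeros}: since $\abs{a_n}\ge R\ge 3\abs{z_0}$ one gets $\abs{a_n-z_0}\ge 2R/3$ and $\abs{a_n}\ge\tfrac34\abs{a_n-z_0}$, whence $S\le 2\sum_{\abs{a_n-z_0}\ge 2R/3}\abs{a_n-z_0}^{-2}\le 2\int_{2R/3}^{\infty}t^{-2}\,dN_f(t,z_0)$. Integrating by parts — the boundary term at infinity vanishes since $N_f(t,z_0)=O(t)$ by \eqref{numzeros} — gives $S\le 4\int_{2R/3}^{\infty}t^{-3}N_f(t,z_0)\,dt$, and substituting \eqref{numzeros} once more and evaluating the elementary integral produces $S\le 9\bigl(c_2+\log(c_1/\abs{f(z_0)})\bigr)/R$, the bound required.

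I expect the main obstacle to be the bookkeeping in the second paragraph: $\Pi_f$ is built from the zeros ordered by increasing modulus, whereas Jensen's formula naturally counts zeros in disks centered at $z_0$, and the hypothesis $R\ge 3\abs{z_0}$ is precisely what lets one compare the two families of radii, justify passing to the Stieltjes integral, and carry the integration by parts through with clean constants. Everything else — Jensen's formula, the estimate $\abs{\log E(w)}\le 2\abs{w}^2$, and the inequality $\abs{e^{u}-1}\le\abs{u}e^{\abs{u}}$ — is routine, and the finiteness of $S$ (hence convergence of $\Pi_f$) comes out of the same estimate.
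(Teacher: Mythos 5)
Your argument is correct and is essentially the paper's own proof: Jensen's formula combined with the monotonicity of $N_f(\cdot,z_0)$ over $[r,er]$ gives \eqref{numzeros}, and the product bound is obtained exactly as in the paper via $\abs{\log E(w)}\le 2\abs{w}^2$, the comparison $\abs{a_n}\ge\tfrac34\abs{a_n-z_0}$ and $\abs{a_n-z_0}\ge 2R/3$ afforded by $R\ge 3\abs{z_0}$, and integration by parts against the counting function followed by \eqref{numzeros}. The only additions are bookkeeping details (the vanishing boundary term, finiteness of $S$) that the paper leaves implicit.
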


Assume $\rho\ge\max(1,2\kappa)$; then Lemma~\ref{apriori}.iii shows that $3\le\abs{w(3i\rho)}$.  Therefore, by Lemma~\ref{exptype}
\begin{equation}\label{wnum}
N_w(r,3i\rho)\le4er+12\rho+\log(5\kappa);
\end{equation}
and
\begin{equation}\label{wPi}
\abs{\Pi_w(R,z)-1}\le \frac{C_1\abs{z}^2}{R}\exp\left(\frac{C_1\abs{z}^2}{R}\right)
\end{equation}
for $R\ge 3\rho$, $z\in\overline{\mathcal D}(R/2)$, and $C_1=18(4+\log[(\kappa+2)/3])$.  Furthermore, the same lemma gives
\begin{equation}\label{snum}
N_{s}(r,0)\le2er+\log(\kappa/\delta);
\end{equation}
\begin{equation}\label{sPi}
\abs{\Pi_{s^-}(R,z)-1}\le C_2\frac{\abs{z}^2}{R}\exp\left(C_2\frac{\abs{z}^2}{R}\right)
\end{equation}
for $R\ge 0$, $z\in\overline{\mathcal D}(R/2)$, and $C_2=18[2+\log(\kappa/\delta))$.

We now give two propositions that reveal regions where $w$ and $s$ cannot have zeros.  The first gives a neighborhood of the origin in which neither $w$ nor $s$ vanish in terms of $Q$ and $\delta$.  The second, due to Hitrik, supplies a resonance-free strip for a potential based upon the length of its support and its $L^1$-norm.  We omit the proof of this proposition due to its length and refer the interested reader to \cite{Hitrik1999} for the details.

\begin{prop}\label{lowbound}There exists a constant $\gamma=\gamma(Q,\delta)>0$ such that for every $q\in B_\delta (Q)$, the functions $w_q$ and $s_q$ do not vanish in the disk $\overline{\mathcal D}(\gamma)$.
\begin{proof}
Let $q\in B_\delta (Q)$, and set $A(t)=K^+_x(-1,t)-K^+_t(-1,t)$.  Using \eqref{wrep}, we expand $w$ as
\begin{equation*}
w_q(z)=w-q(0)+iz\left(2 +\int_{-1}^3(t+1)A(t)\, dt\right)+\sum_{n=2}^\infty\frac{(iz)^n}{n!}\int_{-1}^3(t+1)^nA(t)\, dt.
\end{equation*}
By \eqref{Kxbound}, Lemma~\ref{wands}~(iv), and Definition~\ref{BQ}, there is a constant $C=C(Q)$ such that
\begin{equation*}
\abs{w(z)}\ge\delta-C\abs{z}e^{4\abs{z}}.
\end{equation*}
Therefore, there exists a constant $\gamma_w=\gamma_w(Q,\delta)$ such that $\delta-C\abs{z}\exp(4\abs{z})>0$ for $\abs{z}\le \gamma_w$.

On the other hand, we have 
\begin{equation*}
s(z)=-\sum_{n=\ell}^\infty \frac{i^n}{n!}\int_{-1}^{3}(t-1)^n[K^+_x(c,t)+K^+_t(c,t)]\, dt\, z^n.
\end{equation*}
Again, we apply \eqref{Kxbound} and Definition~\ref{BQ} to find that there exists a constant $\gamma_s=\gamma_s(Q,\delta)$ such that $\abs{s(z)}>0$ for $\abs{z}\le \gamma_s$.  The smaller of $\gamma_w$ and $\gamma_s$ is the desired constant.
\end{proof}
\end{prop}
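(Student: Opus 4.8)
The plan is to prove a quantitative version of continuity at the origin: I would show that the Taylor series of $w_q$ and $s_q$ about $z=0$ are dominated by their constant terms on a disk whose radius depends only on $Q$ and $\delta$. This works because $w_q(0)=-s_q(0)$ by Lemma~\ref{wands}(iv) and $|w_q(0)|=|s_q(0)|\ge\delta$ by Definition~\ref{BQ}(ii), while the higher Taylor coefficients of $w_q$ and $s_q$ turn out to be bounded in terms of $Q$ alone, uniformly over $q\in B_\delta(Q)$.

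First I would expand $w_q$ and $s_q$ in power series about $z=0$. Writing $A(t)=K^+_x(-1,t)-K^+_t(-1,t)$ and Taylor-expanding the exponential in \eqref{wrep} (with support in $[-1,1]$) gives
\[
w_q(z)=w_q(0)+iz\Bigl(2+\int_{-1}^3(t+1)A(t)\,dt\Bigr)+\sum_{n=2}^\infty\frac{(iz)^n}{n!}\int_{-1}^3(t+1)^nA(t)\,dt,
\]
and \eqref{srep} yields the analogous expansion of $s_q$ with $K^+_x+K^+_t$ replacing $A$ and $(t-1)$ replacing $(t+1)$. The next step is to bound the coefficients uniformly: by \eqref{Kxbound} (here $d-c=2$) one has $|K^+_\tau(x,t)|\le\frac14|q(\frac{x+t}{2})|+\frac12 Q^2e^{2Q}$, and since all integrals are over the fixed interval $[-1,3]$, on which $0\le t+1\le4$ and $|t-1|\le2$, a change of variables turning $\int_{-1}^3|q(\frac{t\mp1}{2})|\,dt$ into $2\|q\|_1\le2Q$ shows that $\int_{-1}^3(t+1)^n|A(t)|\,dt\le4^n\,C_0(Q)$ and similarly for $s_q$ with $2^n$. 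Summing over $n\ge1$ and using $e^x-1\le xe^x$ then gives the estimates
\[
|w_q(z)|\ge\delta-C(Q)\,|z|e^{4|z|},\qquad |s_q(z)|\ge\delta-C(Q)\,|z|e^{2|z|}.
\]
Finally, I would pick $\gamma=\gamma(Q,\delta)>0$ small enough that $\delta-C(Q)\gamma e^{4\gamma}>0$ (the same choice serves for $s_q$ since $e^{2|z|}\le e^{4|z|}$); then neither $w_q$ nor $s_q$ vanishes on $\overline{\mathcal D}(\gamma)$.

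I do not anticipate a real obstacle---this is a soft, quantitative argument rather than a delicate one---but the single point that must be watched is that every constant be uniform over $B_\delta(Q)$ and not depend on the individual potential. That uniformity is automatic because the right-hand sides of \eqref{Kbound} and \eqref{Kxbound} depend on $q$ only through $\|q\|_1\le Q$, and all the relevant integrals are over the fixed interval $[-1,3]$. It is also worth recording that the hypothesis $|s_q(0)|\ge\delta>0$ already excludes a half-bound state, so $\ell=0$ in \eqref{sfact} and the constant term of $s_q$ is genuinely nonzero; no separate argument for that is required.
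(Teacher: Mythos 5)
Your proposal is correct and follows essentially the same route as the paper: expand $w_q$ and $s_q$ in power series about $z=0$ via \eqref{wrep} and \eqref{srep}, bound the Taylor coefficients uniformly over $B_\delta(Q)$ using \eqref{Kxbound} with $d-c=2$, and combine with $\abs{w_q(0)}=\abs{s_q(0)}\ge\delta$ to obtain $\abs{w_q(z)}\ge\delta-C(Q)\abs{z}e^{4\abs{z}}$ (and the analogous bound for $s_q$), then choose $\gamma(Q,\delta)$ accordingly. Your added remarks on uniformity of the constants and on $\ell=0$ simply make explicit what the paper leaves implicit.
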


\begin{prop}\label{hitrik}
Let $p\in L^1(\real)$ be supported on an interval of length $d>0$.  Assume $k$ is (the square root of) a resonance of $p$ with $\re k\neq 0$.  Then,
\begin{equation*}
d\abs{\im k}\ge\frac14\exp(-2d\norm{p}_1).
\end{equation*}
\end{prop}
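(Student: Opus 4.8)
The plan is to work entirely with $w$, since by Lemma~\ref{wands} the resonances with $k\ne 0$ are exactly the zeros of $w$ in the lower half-plane. First I would normalize: by Lemma~\ref{shift} a translation leaves $w$ unchanged, so assume $\textnormal{supp }p\subseteq[0,d]$, and write $w$, $s^-$ for the functions attached to $p$. Comparing the two expressions for $f^+(x,z)$ valid at $x\le 0$ --- the Volterra representation $f^+(x,z)=e^{izx}+\int_x^d\frac{\sin z(t-x)}{z}p(t)f^+(t,z)\,dt$ and the decomposition \eqref{fpm} --- and equating the coefficients of $e^{izx}$ gives the representation
\begin{equation*}
w(z)=2iz-\int_0^d e^{-izt}p(t)f^+(t,z)\, dt=2iz-\int_0^d p(t)\,m^+(t,z)\, dt,
\end{equation*}
where $m^+(x,z):=e^{-izx}f^+(x,z)$ equals $1$ for $x\ge d$ and solves the Volterra equation $m^+(x,z)=1+\int_x^d\frac{e^{2iz(t-x)}-1}{2iz}\,p(t)\,m^+(t,z)\, dt$.

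The second step is an \emph{a priori} bound on $m^+$. In the lower half-plane, for $0\le x\le t\le d$ one has $\abs{\frac{e^{2iz(t-x)}-1}{2iz}}\le(t-x)e^{2(t-x)\abs{\im z}}\le d\,e^{2d\abs{\im z}}$, so iterating the Volterra equation (Gr\"onwall) yields $\abs{m^+(t,z)}\le\exp(d\,e^{2d\abs{\im z}}\norm{p}_1)$. Since $\tfrac14 e^{-2d\norm{p}_1}\le\tfrac12\log 2$, we may assume at the outset that $d\abs{\im k}<\tfrac12\log 2$ (otherwise there is nothing to prove), so that $e^{2d\abs{\im k}}<2$ and hence $\abs{m^+(t,k)}< e^{2d\norm{p}_1}$. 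If $k$ is a resonance then $w(k)=0$ forces $2ik=\int_0^d p(t)m^+(t,k)\,dt$, whence $2\abs{k}\le\norm{p}_1 e^{2d\norm{p}_1}$; in particular all resonances that could possibly violate the claim lie in a fixed disk about the origin, which disposes of the large-modulus case. What remains, and what this inequality does not yet give, is a lower bound on $\abs{\im k}$.

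For that I would combine the above with the sharp bound on the real axis: by parts (i) and (iii) of Lemma~\ref{wands}, $\abs{w(\xi)}^2=w(\xi)w(-\xi)=4\xi^2+\abs{s^-(\xi)}^2\ge 4\xi^2$ for real $\xi$. Writing $k=\xi+i\eta$, so $w(k)=0$ and $\abs{k-\xi}=\abs{\eta}$, one gets $2\abs{\xi}\le\abs{w(\xi)-w(k)}\le\abs{\eta}\,\sup_{[\xi,k]}\abs{w'}$, and $\abs{w'}$ on this short vertical segment is controlled (differentiate the Volterra equation in $z$ and apply Gr\"onwall once more) by a constant $C(\norm{p}_1,d)(1+\abs{\xi})$; after optimizing this produces a lower bound on $\abs{\eta}$ of order $\abs{\xi}/(1+\abs{\xi})$. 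The main obstacle --- the step that is genuinely delicate rather than routine, and the reason the argument in \cite{Hitrik1999} is long enough to be omitted here --- is the region in which $\re k$ \emph{and} $\abs{k}$ are both small. There the bound $\abs{w(\xi)}\ge 2\abs{\xi}$ degenerates, and $w(0)=-s^-(0)$ can itself be arbitrarily close to $0$ (the half-bound state being the limiting case), so no crude estimate works. The way around it is to pass to $W(z):=w(z)w(-z)=4z^2+s^-(z)s^-(-z)$: this function is \emph{even}, has real Taylor coefficients, satisfies $W(\xi)\ge 4\xi^2>0$ for every real $\xi\ne 0$ (so $W$ has no nonzero real zeros), and has exponential-type growth with constants depending only on $\norm{p}_1$ and $d$ (from \eqref{wrep}). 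The zeros of $W$ that must be excluded near the origin are precisely those off the imaginary axis, and feeding the positivity of $W$ on $\real$ together with its growth into a Jensen / minimum-modulus estimate on a disk centred at $0$ is what pins the resonance-free strip down to the explicit width $\tfrac{1}{4d}e^{-2d\norm{p}_1}$. I expect extracting that precise constant --- rather than merely \emph{some} positive width --- to be the only non-routine point.
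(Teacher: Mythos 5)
The paper does not prove this proposition at all: it is quoted from Hitrik and the text explicitly omits the argument ``due to its length,'' referring to \cite{Hitrik1999}. So the only question is whether your attempt is a complete substitute proof, and it is not. Your first two steps are sound: the representation $w(z)=2iz-\int_0^d p(t)m^+(t,z)\,dt$, the Volterra/Gr\"onwall bound $\abs{m^+(t,z)}\le\exp(d e^{2d\abs{\im z}}\norm{p}_1)$, the harmless reduction to $d\abs{\im k}<\tfrac12\log 2$, and the conclusion that any resonance violating the claim lies in a disk of radius $\tfrac12\norm{p}_1e^{2d\norm{p}_1}$. Your third step, $2\abs{\xi}\le\abs{w(\xi)-w(k)}\le\abs{\eta}\sup\abs{w'}$, is also fine but, as you concede, yields only $\abs{\eta}\gtrsim\abs{\xi}$, which degenerates exactly where the proposition has content, namely for resonances with $\re k$ nonzero but arbitrarily small.

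That remaining case is the entire theorem, and the paragraph you offer for it is not a proof but a hope, and as described it cannot work. Positivity of $W(z)=w(z)w(-z)$ on $\real$ together with exponential-type growth, fed into Jensen or a minimum-modulus estimate, only controls the \emph{number and moduli} of zeros of $W$ in disks; it cannot distinguish a forbidden zero at distance $r$ from the origin with tiny nonzero real part from a perfectly admissible antibound state at distance $r$ on the negative imaginary axis --- and such antibound states genuinely occur arbitrarily close to $0$ (small potentials), while $W(0)=w(0)^2$ may itself be arbitrarily small, so Jensen centred at $0$ gives nothing uniform. Any correct argument must exploit $\re k\neq 0$ quantitatively (e.g.\ the fact that such a $k$ forces the symmetric quadruple $\pm k,\pm\overline k$ of zeros of $W$ straddling the real axis, against the lower bound $W(\xi)\ge 4\xi^2$ which itself degenerates at $\xi=0$), and must do so sharply enough to produce the explicit constant $\tfrac{1}{4d}e^{-2d\norm{p}_1}$. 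You explicitly defer this step, so the proposal has a genuine gap precisely at the non-routine point; the routine outer estimates you do supply are not where the difficulty lies.
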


\subsection{The difference of $w$ and $\tilde w$ when their small zeros are approximately the same}
We will focus now on the difference of $w$ and $\tilde w$.  Our goal is to prove the following.
\begin{thm}\label{wdiff}
Let $Q$ and $\delta$ be positive and $q,\, \tilde q\in B_\delta(Q)$.  Then there exist positive constants $R_0=R_0(Q)$ and $C=C(Q,\delta)$ such that the following statement is true for every $R\ge R_0$.  There exists $E=E(Q,R)>0$ such that for every $0<\eps\le E$ it is the case that if the zeros of $w$ and $\tilde w$ are $\eps$-close in the disk $\mathcal D(R)$, then for every $z\in[-R^{1/3},R^{1/3}]$ we have
\begin{equation*}
\abs{w(z)-\tilde w(z)}\le C(1+\abs{z})(R^{-1/3}+\eps N)
\end{equation*}
where $N=N_w(R;0)=N_{\tilde w}(R;0)$.
\end{thm}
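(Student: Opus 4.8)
The plan is to combine the Hadamard factorization \eqref{wfact}, the a priori bounds of Lemma~\ref{apriori}, the tail estimate \eqref{wPi}, and Hitrik's resonance-free strip (Proposition~\ref{hitrik}). Write $E(u)=(1-u)e^{u}$ and let $\kappa=\kappa(Q)$, $\gamma=\gamma(Q,\delta)$ be the constants of Lemma~\ref{apriori} and Proposition~\ref{lowbound}. Since $q,\tilde q\in B_\delta(Q)$ there is no half-bound state, so $m=0$ in \eqref{wfact}; moreover \eqref{wrep} shows $w(0)=e^{a_0}$ is real with $\delta\le\abs{w(0)}\le\kappa$ and that $a_1=w'(0)/w(0)$ is purely imaginary (the $n$-th term of $(\log w)'$ vanishes at $0$, and $w'(0)\in i\real$ by \eqref{wrep} since $K^+$ is real-valued), so $\abs{e^{a_1z}}=1$ for $z\in\real$; likewise for $\tilde w$. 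First I would split each factorization at radius $R$, writing $w=w_{<R}\cdot\Pi_w(R,\cdot)$ with $w_{<R}(z)=e^{a_0+a_1z}\prod_{\abs{w_n}<R}E(z/w_n)$ and $\Pi_w$ as in Lemma~\ref{exptype}, and taking $R_0=R_0(Q)$ large enough that \eqref{wnum}, \eqref{wPi} apply on $\overline{\mathcal D}(R/2)$ and that $R^{1/3}$ exceeds $3\max(1,2\kappa)$ and $Q$. Then on $[-R^{1/3},R^{1/3}]\subset\overline{\mathcal D}(R/2)$ the bound \eqref{wPi} gives $\abs{\Pi_w(R,z)-1}\le CR^{-1/3}$, so $\abs{\Pi_w(R,z)}\ge\tfrac12$, whence by Lemma~\ref{apriori}(i) $\abs{w_{<R}(z)}\le 2\abs{w(z)}\le 2(2\abs{z}+\kappa)\le C(1+\abs{z})$ there, and likewise for $\tilde w$. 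Since $w-\tilde w=(w_{<R}-\tilde w_{<R})+w_{<R}(\Pi_w-1)-\tilde w_{<R}(\Pi_{\tilde w}-1)$, the last two terms are already $\le C(1+\abs{z})R^{-1/3}$, and everything reduces to estimating $w_{<R}-\tilde w_{<R}$ on the interval.

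For that, $\eps$-closeness provides a bijection between the $N$ zeros of $w$ in $\mathcal D(R)$ and those of $\tilde w$ with $\abs{w_n-\tilde w_n}\le\eps$; since both zero sets are invariant under the isometry $\zeta\mapsto-\overline\zeta$ (Lemma~\ref{wands}(i)), this bijection may be taken to commute with it. I would then group $\prod_{\abs{w_n}<R}E(z/w_n)$ into conjugate-symmetric pairs $\{w,-\overline w\}$ (zeros on the imaginary axis treated singly), write each pair factor as $P(z)=(1-z/w)(1+z/\overline w)\,e^{z(1/w-1/\overline w)}$ — whose exponent is purely imaginary for real $z$ — and work with the identity $\log\bigl(w_{<R}(z)/\tilde w_{<R}(z)\bigr)=(a_0-\tilde a_0)+(a_1-\tilde a_1)z+\sum_j\log\bigl(P_j(z)/\tilde P_j(z)\bigr)$, legitimate on $\real$ once the ratios $P_j/\tilde P_j$ are seen to stay near $1$. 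To see that, I would write each rational subfactor as $\frac{1-z/w}{1-z/\tilde w}=1+\frac{z(w-\tilde w)}{w(\tilde w-z)}$ and use $\abs{z}\le\abs{w}+\abs{\tilde w-z}+\eps$ to cancel the factor $\abs{\tilde w-z}$ from the denominator; together with $\abs{\tilde w-z}\ge c_0(Q,\delta)>0$ for real $z$ — this is exactly where Proposition~\ref{hitrik} is used, via $\abs{\im\tilde w}\ge\tfrac18 e^{-4Q}$ for off-axis resonances, the eigenvalues and axis-resonances being handled by $\abs{\tilde w}\ge\gamma$ — and $\abs{w}\ge\gamma$, this makes each rational subfactor $1+O_{Q,\delta}(\eps)$ uniformly in $z$. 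The exponential subfactor has exponent $z\bigl(g(w)-g(\tilde w)\bigr)$ with $\abs{g(w)-g(\tilde w)}\le 2\abs{w-\tilde w}/(\abs{w}\abs{\tilde w})$, so its sum over pairs has total modulus $\le CR^{1/3}\eps\sum_{\abs{w_n}<R}\abs{w_n}^{-2}\le C(Q,\delta)R^{1/3}\eps$, the last sum being bounded uniformly by the counting estimate \eqref{wnum}. Hence $\bigl|\sum_j\log(P_j(z)/\tilde P_j(z))\bigr|\le C(N\eps+R^{1/3}\eps)$ for real $z$ on the interval; the same bound holds at any point of large positive imaginary part as well, since there every zero automatically lies at distance $\gtrsim R^{1/3}$.

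It remains to control the affine term $(a_0-\tilde a_0)+(a_1-\tilde a_1)z$ on the interval. I would do this by evaluating the moduli $\abs{w_{<R}(z)/\tilde w_{<R}(z)}$ at $z=iR^{1/3}$ and $z=2iR^{1/3}$, where Lemma~\ref{apriori}(iii) forces $w=2iz(1+O(R^{-1/3}))$ and the same for $\tilde w$ (so those moduli are $1+O(R^{-1/3})$) and the estimate of the previous paragraph applies since all zeros are far away; recalling that $a_1-\tilde a_1\in i\real$, the two resulting relations give $\abs{\re(a_0-\tilde a_0)}\le C(R^{-1/3}+N\eps+R^{1/3}\eps)$ and $R^{1/3}\abs{a_1-\tilde a_1}\le C(R^{-1/3}+N\eps+R^{1/3}\eps)$. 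The only remaining datum — whether $w(0)$ and $\tilde w(0)$ have the same sign, i.e. whether $\im(a_0-\tilde a_0)=0$ — follows because the two potentials have the same number $p$ of eigenvalues (the eigenvalues lie in $\mathcal D(C(Q))\subset\mathcal D(R)$ and the zero sets are $\eps$-close) while the sign of $w(0)$ equals $(-1)^{p+1}$ by the standard sign count for the real-valued function $y\mapsto w(iy)$. Thus $\abs{(a_0-\tilde a_0)+(a_1-\tilde a_1)z}\le C(R^{-1/3}+N\eps+R^{1/3}\eps)$ for all real $z$ with $\abs{z}\le R^{1/3}$, and combining with the second paragraph, $\abs{\log(w_{<R}(z)/\tilde w_{<R}(z))}\le C(R^{-1/3}+N\eps+R^{1/3}\eps)$ on the interval. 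Choosing $E=E(Q,R)>0$ so small that $\eps\le E$ forces both $R^{1/3}\eps\le R^{-1/3}$ and this quantity to be at most $1$ (which is also what makes the logarithms above legitimate), exponentiating gives $\abs{w_{<R}(z)-\tilde w_{<R}(z)}=\abs{\tilde w_{<R}(z)}\cdot\abs{w_{<R}(z)/\tilde w_{<R}(z)-1}\le C(1+\abs{z})(R^{-1/3}+N\eps)$; adding back the two $\Pi$-terms gives the assertion with $C=C(Q,\delta)$.

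The hard part is the middle step: the individual genus-one factors $(1-z/w_n)e^{z/w_n}$ are \emph{not} uniformly bounded on $[-R^{1/3},R^{1/3}]$ — they can be exponentially large or small whenever $\abs{w_n}$ is comparable to $\gamma$ — so the finite products of $w$ and $\tilde w$ cannot be compared factor by factor, and a naive telescoping of the difference would lose uncontrolled powers of $R$ (or worse) through the partial products. Pairing conjugate zeros, which makes the exponential parts unimodular on $\real$, together with the resonance-free strip, which keeps the rational parts bounded away from $0$ near $\real$, is precisely what forces the error to be genuinely linear in $\eps$ (and in $N$) rather than exponential in a power of $R$; this, together with the bookkeeping for the linear exponential factor, is where the estimates must be done carefully.
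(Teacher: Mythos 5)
Your argument is correct, and while it shares the paper's overall architecture (split the Hadamard product at radius $R$, control the tails by \eqref{wPi}, compare the small-zero factors one at a time on the real axis using Proposition~\ref{hitrik} for the off-axis resonances and Proposition~\ref{lowbound} for the eigenvalues and on-axis resonances, then exponentiate via $\abs{e^u-1}\le\abs{u}e^{\abs u}$), it takes a genuinely different route at the key step, namely the affine Hadamard factor. The paper absorbs the genus-one exponentials of the small zeros into a redefined affine $g$, writes $w=\tilde w\, e^{g-\tilde g}W\,\Pi_w/\Pi_{\tilde w}$ with $W(z)=\prod_{n\le N}(z-w_n)/(z-\tilde w_n)$, bounds $\abs{e^{g-\tilde g}-1}$ first on the disk $\overline{\mathcal D}(R^{1/3},3iR^{1/3})$ -- where $\abs{W^{-1}-1}$, $\abs{w/\tilde w-1}$ and the tail ratio are all small because every zero of $w$ has imaginary part at most $\kappa/2$ -- and then transfers the bound down to $[-R^{1/3},R^{1/3}]$ using only the affine structure and Cauchy's estimate (Lemma~\ref{expbound1}, Corollary~\ref{expbound2}); it never needs to know $a_1$, the sign of $w(0)$, or anything about $\sum\abs{w_n}^{-2}$. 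You instead keep the elementary factors explicit, prove the $w$-analogue of Proposition~\ref{b1imag} (that $a_1=w'(0)/w(0)\in i\real$, which is indeed correct from \eqref{wrep}), fix the sign ambiguity in $e^{a_0-\tilde a_0}$ by the eigenvalue-parity count for the real function $y\mapsto w(iy)$ (valid: the upper half-plane zeros are simple, lie on $i(0,\infty)$, and all zeros are at distance at least $\min(\gamma,e^{-4Q}/8)$ from $\real$, so for small $\eps$ the pairing preserves the number of eigenvalues), and recover $\re(a_0-\tilde a_0)$ and $a_1-\tilde a_1$ from moduli at two points on the imaginary axis where Lemma~\ref{apriori}(iii) applies. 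This buys a more concrete identification of the two unknown real parameters at the price of extra standard facts and heavier bookkeeping; the paper's Cauchy-estimate trick gets the same conclusion with none of them. Two caveats, neither fatal: (i) your claim that the $\eps$-bijection ``may be taken to commute with $\zeta\mapsto-\overline\zeta$'' is false in general (a double on-axis zero of $w$ can face a conjugate off-axis pair of zeros of $\tilde w$), but you never actually use it -- your factor-by-factor rational estimates and the bound $\eps\sum_{\abs{w_n}<R}\abs{w_n}^{-1}\abs{\tilde w_n}^{-1}\le C(Q,\delta)\eps$ require only the given pairing, and the eigenvalue count only needs the pairing to preserve half-planes; (ii) keeping the genus-one exponentials produces the extra term $R^{1/3}\eps$, which you absorb by shrinking $E$ to roughly $R^{-2/3}$ -- permissible, since $E$ is allowed to depend on $R$ (and note that your smallness condition on $\eps$, like the paper's own hypothesis in Lemma~\ref{Wbound}, depends on $\delta$ through $\gamma$, a dependence the statement $E=E(Q,R)$ quietly elides in both proofs).
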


The proof of this theorem will be given after the following lemmas which contain the main steps.  First, we redefine $g$ so that we may rewrite \eqref{wfact} as
\begin{equation}\label{wrewrite}
w(z)=e^{g(z)}\Pi_w(R,z)\prod_{\abs{w_n}<R}(w_n-z).
\end{equation}
Set $N=N_w(R;0)=N_{\tilde w}(R;0)$, and define
\begin{equation}\label{W}
W(z)=\prod_{n=1}^N\frac{z-w_n}{z-\tilde w_n}.
\end{equation}
Since $\tilde w$ does not vanish on the real line, dividing by $z-\tilde w_n$ poses no problem for $z\in\real$.  Note we have rewritten the indices so that $\tilde w_n$ is the one close to $w_n$.  From \eqref{W}, we have
\begin{equation}\label{wtotildew}
w(z)=\tilde w(z)e^{g(z)-\tilde g(z)}W(z)\frac{\Pi_w(R,z)}{\Pi_{\tilde w}(R,z)},
\end{equation}
after also redefining $\tilde g$.
\begin{lem}\label{expbound1}
Suppose $\abs{w_n-\tilde w_n}\le\eps\le9\kappa/4$ for $1\le n\le N$.  Then, there exist constants $R_0=R_0(Q)\ge\max(2\kappa,1)$ and $C=C(Q)$ such that the following statement is true for every $R\ge R_0$.   If $\abs{z-3iR^{1/3}}\le R^{1/3}$, then
\begin{equation*}
\abs{e^{g(z)-\tilde g(z)}-1}\le C\left[R^{-1/3}+\eps N\exp\left(\frac{2}{3\kappa}\eps N\right)\right].
\end{equation*}
\begin{proof}
From \eqref{wtotildew},
\begin{equation}\label{start}
\begin{split}
\abs{e^{g(z)-\tilde g(z)}-1}&\le\abs{\frac{w(z)}{\tilde w(z)}(W(z))^{-1}}\abs{\frac{\Pi_{\tilde w}(R,z)}{\Pi_w(R,z)}-1}\\
&\qquad+\abs{W(z)}^{-1}\abs{\frac{w(z)}{\tilde w(z)}-1}+\abs{(W(z))^{-1}-1}.
\end{split}
\end{equation}
We begin by estimating the final term on the right hand side above.  By Lemma \ref{apriori}.iii, 
\begin{equation}\label{wbelow}
\abs{w(z)}\ge\abs{z}\ge 2\rho\ge 4\kappa
\end{equation}
whenever $\rho\ge 2\kappa$ and $\abs{z-3i\rho}\le\rho$.  Furthermore, by Lemma~\ref{apriori}.i, we know $\abs{w(z)}\ge 2\abs{z}-\kappa$ when $\im z\ge 0$.  So, we get $w(z)\neq 0$ for $z$ in the closed upper half-plane for which $2\abs{z}>\kappa$.  Hence, for every $n\in\mathbb N$, the imaginary part of $w_n$ cannot exceed $\kappa/2$.  We conclude that 
\begin{equation*}
\abs{z-w_n}\ge\im(z-w_n)\ge2\rho-\kappa/2\ge3\kappa,
\end{equation*}
when $\abs{z-3i\rho}\le\rho$.  
Therefore, for $1\le n\le N$
\begin{equation*}
\abs{\frac{\tilde w_n-w_n}{z-w_n}}\le\frac{\eps}{3\kappa}\le \frac34.
\end{equation*}

Since
\begin{equation*}
\frac{z-\tilde w_n}{z-w_n}=1+\frac{w_n-\tilde w_n}{z-w_n},
\end{equation*}
and $\log(1+t)\le 2\abs{t}$ for $\abs{t}\le3/4$, we have
\begin{equation*}
\abs{\log (W(z)^{-1})}\le\sum_{n=1}^N\abs{\log\left(1+\frac{w_n-\tilde w_n}{z-w_n}}\right)\le \frac{2}{3\kappa}\eps N.
\end{equation*}
Using $\abs{e^u-1}\le\abs{u}e^{\abs{u}}$ with $u=\log (W^{-1})$, we get
\begin{equation}\label{W1}
\abs{(W(z))^{-1}-1}\le\frac{2}{3\kappa}\eps N\exp\left(\frac{2}{3\kappa}\eps N\right).
\end{equation}

For the middle term of \eqref{start}, we apply Lemma~\ref{apriori}.iii and \eqref{wbelow} (which is valid for $\tilde w$ as well) to get
\begin{equation}\label{woverw}
\abs{\frac{w(z)}{\tilde w(z)}-1}\le4\kappa R^{-1/3}.
\end{equation}

As for the first term, inequality \eqref{wPi} implies there are constants $C=C(Q)>0$ and $R_0=R_0(Q)>0$ so that for every $R\ge R_0$,
\begin{equation}\label{pioverpi}
\abs{\frac{\Pi_{\tilde w}(R,z)}{\Pi_w(R,z)}-1}\le CR^{-1/3}.
\end{equation}  

Applying \eqref{W1}, \eqref{woverw}, and \eqref{pioverpi} to \eqref{start} yields the desired inequality.  
\end{proof}
\end{lem}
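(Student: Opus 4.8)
The plan is to solve the factorization identity \eqref{wtotildew} for the exponential factor and then estimate each of the three quotients it produces. Rearranging \eqref{wtotildew} gives
\[
e^{g(z)-\tilde g(z)}=\frac{w(z)}{\tilde w(z)}\cdot\frac{1}{W(z)}\cdot\frac{\Pi_{\tilde w}(R,z)}{\Pi_w(R,z)},
\]
so $e^{g-\tilde g}-1$ is a product of three factors minus one. Writing $A=w/\tilde w$, $B=W^{-1}$, and $C=\Pi_{\tilde w}/\Pi_w$, the telescoping identity $ABC-1=AB(C-1)+(A-1)B+(B-1)$ gives $\abs{ABC-1}\le\abs{AB}\,\abs{C-1}+\abs{B}\,\abs{A-1}+\abs{B-1}$, which is exactly the decomposition \eqref{start}. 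It therefore suffices to bound the three pieces $\abs{W^{-1}-1}$, $\abs{w/\tilde w-1}$, and $\abs{\Pi_{\tilde w}/\Pi_w-1}$ on the disk $\overline{\mathcal D}(R^{1/3};3iR^{1/3})$, while keeping the prefactors $\abs{W^{-1}}$ and $\abs{w/\tilde w}$ under control.

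The key geometric step—and the one I expect to be the crux—is a uniform lower bound $\abs{z-w_n}\ge 3\kappa$ for every $z$ in the disk and every zero $w_n$ with $1\le n\le N$. Here I would use Lemma~\ref{apriori}.i: since $\abs{w(z)}\ge 2\abs{z}-\kappa$ in the closed upper half-plane, $w$ has no zeros there once $2\abs{z}>\kappa$, so every zero of $w$ in the upper half-plane has imaginary part at most $\kappa/2$. Because the disk is centered high up at $3iR^{1/3}$, each of its points has imaginary part at least $2R^{1/3}$, whence $\im(z-w_n)\ge 2R^{1/3}-\kappa/2\ge 3\kappa$ once $R^{1/3}\ge 2\kappa$. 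This yields $\abs{(\tilde w_n-w_n)/(z-w_n)}\le\eps/(3\kappa)\le 3/4$ using the hypothesis $\eps\le 9\kappa/4$, so the branch $u=\log W(z)^{-1}=\sum_{n=1}^N\log(1+(w_n-\tilde w_n)/(z-w_n))$ is well defined; the elementary bound $\abs{\log(1+t)}\le 2\abs{t}$ for $\abs{t}\le 3/4$ then gives $\abs{u}\le\frac{2}{3\kappa}\eps N$, and $\abs{e^u-1}\le\abs{u}e^{\abs{u}}$ produces \eqref{W1}. In particular $\abs{W^{-1}}\le 1+\frac{2}{3\kappa}\eps N\exp(\frac{2}{3\kappa}\eps N)$.

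For the remaining two pieces I would invoke the \textit{a priori} bounds already assembled. Writing $w/\tilde w=(w/2iz)/(\tilde w/2iz)$ and applying Lemma~\ref{apriori}.iii (valid for both $w$ and $\tilde w$ on this disk, since $\rho=R^{1/3}$), both numerator and denominator lie within $\kappa R^{-1/3}$ of $1$; for $R$ large enough the denominator exceeds $1/2$, giving $\abs{w/\tilde w-1}\le 4\kappa R^{-1/3}$, i.e.\ \eqref{woverw}, and also $\abs{w/\tilde w}\le 2$. For the $\Pi$-quotient I would use \eqref{wPi}: on the disk one has $\abs{z}\le 4R^{1/3}$, hence $\abs{z}^2/R\le 16R^{-1/3}$, so $\abs{\Pi_w(R,z)-1}$ and $\abs{\Pi_{\tilde w}(R,z)-1}$ are each at most $CR^{-1/3}$ for $R\ge R_0(Q)$; since then $\abs{\Pi_w}\ge 1/2$, forming the quotient gives \eqref{pioverpi}.

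Finally I would substitute these three estimates into \eqref{start}. The third term is exactly \eqref{W1}; the middle and first terms carry the prefactors $\abs{W^{-1}}\le 1+\frac{2}{3\kappa}\eps N e^{\frac{2}{3\kappa}\eps N}$ and $\abs{w/\tilde w}\le 2$, so each is bounded by a constant times $(1+\frac{2}{3\kappa}\eps N e^{\frac{2}{3\kappa}\eps N})R^{-1/3}$. Using $R^{-1/3}\le 1$ to split $\eps N e^{\frac{2}{3\kappa}\eps N}R^{-1/3}\le\eps N e^{\frac{2}{3\kappa}\eps N}$, all terms collapse into $C[R^{-1/3}+\eps N\exp(\frac{2}{3\kappa}\eps N)]$ for a constant $C=C(Q)$, as claimed. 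The only real care needed is in choosing $R_0=R_0(Q)$ large enough that simultaneously $R^{1/3}\ge 2\kappa$ (for the zero-separation and the $w/\tilde w$ bound), $\abs{z}\le 4R^{1/3}\le R/2$ and $R\ge 3R^{1/3}$ (so that \eqref{wPi} applies on the disk), and the various denominators stay above $1/2$.
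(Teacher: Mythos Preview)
Your proposal is correct and follows essentially the same route as the paper: the same telescoping decomposition \eqref{start}, the same geometric lower bound $\abs{z-w_n}\ge 3\kappa$ obtained from Lemma~\ref{apriori}.i and the height of the disk, the same logarithm estimate yielding \eqref{W1}, and the same use of Lemma~\ref{apriori}.iii and \eqref{wPi} for the remaining two quotients. Your treatment is in fact slightly more explicit than the paper's about how the prefactors $\abs{W^{-1}}$ and $\abs{w/\tilde w}$ are controlled and about which constraints determine $R_0$, but the argument is the same.
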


\begin{cor}\label{expbound2}There exist positive constants $R_0=R_0(Q)$ and $C=C(Q)$ such that for every $R\ge R_0$ the following statement is true.  There exists $E=E(Q,R)>0$ such that for every $0\le\eps\le E$ we have
\begin{equation*}\abs{e^{g(z)-\tilde g(z)}-1}\le C(R^{-1/3}+\eps N)
\end{equation*} 
when $\abs{z}\le R^{1/3}$ and  $\abs{w_n-\tilde w_n}\le \eps$ for $1\le n\le N$.
\begin{proof}
Let $R_0$ and $C$ be the constants given by Lemma~\ref{expbound1}.  Increase them (if needed) so that $C>1/2$ and $R_0^{-1/3}<\exp(-2/(3\kappa))/(2C)$.  Choose $E$ so that 
\begin{equation*}
EN\le \frac{1}{2C}e^{-2/(3\kappa)}-R_0^{-1/3}.
\end{equation*}
Then for $R\ge R_0$ and $\eps\le E$, Lemma~\ref{expbound1} implies 
\begin{equation*}
\abs{e^{g(z)-\tilde g(z)}-1}\le Ce^{2/(3\kappa)}(R^{-1/3}+\eps N)
\end{equation*}
for $\abs{z-3iR^{1/3}}\le R^{1/3}$.  

We are finished after applying the following claim.  If $F(z)=\exp(a_1z+a_0)-1$ and $\abs{F(z)}\le\alpha<1$ for $z\in\mathcal D(\rho,3i\rho)$, then 
\begin{equation*}\abs{F(z)}\le\frac{6\alpha}{1-\alpha}\exp\left(\frac{6\alpha}{1-\alpha}\right)\end{equation*}
for $\abs{z}\le\rho$.  Indeed, for $z\in \mathcal D(\rho,3i\rho)$, we have 
\begin{equation*}
\abs{a_1z+a_0}=\abs{\log(F(z)+1)}\le\alpha(1-\alpha)^{-1}. 
\end{equation*}
Then, Cauchy's estimate yields
\begin{equation*}
\abs{a_1}=\abs{\frac{d}{dz}\log(F(z)+1)\bigr\vert_{z=3i\rho}}\le\frac\alpha\rho.
\end{equation*}
Thus, we find $\abs{a_0}\le5\alpha(1-\alpha)^{-1}$ because $\abs{z}\le4\rho$.  Applying the inequality $\abs{\exp(z)-1}\le\abs{z}\exp\abs{z}$ and the bounds on $a_1$ and $a_0$ for $z\in\mathcal D(\rho)$ completes the proof of our claim.
\end{proof}
\end{cor}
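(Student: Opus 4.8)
The plan is to leverage Lemma~\ref{expbound1}, which already controls $\abs{e^{g(z)-\tilde g(z)}-1}$ on the shifted disk $\mathcal D(R^{1/3},3iR^{1/3})$, and then transport that estimate onto the origin-centered disk $\mathcal D(R^{1/3})$. The key structural fact I would exploit is that $g-\tilde g$ is \emph{affine}: after the redefinitions leading to \eqref{wrewrite} and \eqref{wtotildew} (and because the absence of a half-bound state forces $m=0$), we still have $g(z)=a_1z+a_0$ and $\tilde g(z)=\tilde a_1z+\tilde a_0$. Hence
\[
F(z):=e^{g(z)-\tilde g(z)}-1=\exp(\beta_1 z+\beta_0)-1,\qquad \beta_1=a_1-\tilde a_1,\ \beta_0=a_0-\tilde a_0 .
\]
It is precisely this rigidity that permits a bound on one disk to be propagated to another.

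First I would take $R_0$ and $C$ from Lemma~\ref{expbound1} and, if necessary, enlarge them so that $C>1/2$ and $R_0^{-1/3}<\tfrac{1}{2C}e^{-2/(3\kappa)}$. I would then choose $E=E(Q,R)>0$ small enough that $E\le 9\kappa/4$ and $EN\le\tfrac{1}{2C}e^{-2/(3\kappa)}-R_0^{-1/3}$; since $N=N_w(R;0)$ depends only on $Q$ and $R$ through \eqref{wnum}, such a choice exists. For $R\ge R_0$ and $0\le\eps\le E$ one then has
\[
R^{-1/3}+\eps N\le R_0^{-1/3}+EN\le\tfrac{1}{2C}e^{-2/(3\kappa)}<1 ,
\]
so in particular $\tfrac{2}{3\kappa}\eps N\le\tfrac{2}{3\kappa}$ and $\exp(\tfrac{2}{3\kappa}\eps N)\le e^{2/(3\kappa)}$. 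Substituting this into the conclusion of Lemma~\ref{expbound1} eliminates the exponential weight on the $\eps N$ term and gives, for all $z\in\mathcal D(R^{1/3},3iR^{1/3})$,
\[
\abs{F(z)}\le Ce^{2/(3\kappa)}\bigl(R^{-1/3}+\eps N\bigr)=:\alpha,
\]
where the choice of $E$ forces $\alpha\le 1/2$.

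The heart of the argument is a transfer claim, which I would state and prove separately: if $F(z)=\exp(\beta_1z+\beta_0)-1$ satisfies $\abs{F(z)}\le\alpha<1$ throughout $\mathcal D(\rho,3i\rho)$, then $\abs{F(z)}\le\tfrac{6\alpha}{1-\alpha}\exp\bigl(\tfrac{6\alpha}{1-\alpha}\bigr)$ for all $\abs{z}\le\rho$. To prove it I would invert the exponential: since $\abs{F}<1$ on $\mathcal D(\rho,3i\rho)$, the branch $\beta_1z+\beta_0=\log(1+F(z))$ is single-valued there with $\abs{\beta_1z+\beta_0}\le\alpha/(1-\alpha)$. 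A Cauchy estimate at the center $z=3i\rho$ yields $\abs{\beta_1}\le\alpha/[\rho(1-\alpha)]$, and since every point of $\mathcal D(\rho,3i\rho)$ satisfies $\abs{z}\le 4\rho$, this gives $\abs{\beta_0}\le 5\alpha/(1-\alpha)$. For $\abs{z}\le\rho$ the two bounds combine to $\abs{\beta_1z+\beta_0}\le 6\alpha/(1-\alpha)$, and $\abs{e^u-1}\le\abs{u}e^{\abs{u}}$ finishes the claim. Applying it with $\rho=R^{1/3}$ and $\alpha\le 1/2$ (so $\tfrac{6\alpha}{1-\alpha}\le 12\alpha$ and the exponential factor is at most $e^{6}$) bounds $\abs{F(z)}$ by a fixed multiple of $\alpha$, hence by $C'(R^{-1/3}+\eps N)$ on $\mathcal D(R^{1/3})$, which is the desired inequality.

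The main obstacle is the off-center geometry: Lemma~\ref{expbound1} is available only on a disk pushed up the imaginary axis, where Lemma~\ref{apriori}.iii keeps $w$ and $\tilde w$ bounded away from zero, whereas the corollary demands control on a disk straddling the real axis. The affine nature of $g-\tilde g$ is exactly what resolves this, but it must be wielded with care: taking the logarithm and controlling the resulting geometric-type constants both require $\alpha<1$, which is the sole reason the threshold $E$ is introduced and $R_0$, $C$ are enlarged at the start.
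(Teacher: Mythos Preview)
Your proposal is correct and follows essentially the same route as the paper's proof: apply Lemma~\ref{expbound1} on the shifted disk with $E$ chosen so that the exponential weight is absorbed and the resulting bound $\alpha$ is strictly less than $1$, then use the affine structure of $g-\tilde g$ together with a Cauchy estimate to transfer the bound to $\mathcal D(R^{1/3})$. Your Cauchy bound $\abs{\beta_1}\le\alpha/[\rho(1-\alpha)]$ is in fact the sharp version of the paper's stated $\abs{a_1}\le\alpha/\rho$, and either suffices for the claimed $6\alpha/(1-\alpha)$ control.
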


\begin{lem}\label{Wbound}Suppose $\abs{w_n-\tilde w_n}\le\eps\le3\min\set{\gamma,\exp(-4Q)/8}/4$ for $1\le n\le N$ where $\gamma$ is the number given in Proposition~\ref{lowbound}.  Then, there exists a $C=C(Q,\delta)$ such that for every $z\in\real$
\begin{equation*}
\abs{W(z)-1}\le C\eps N\exp(C\eps N).
\end{equation*} 

\begin{proof}
Let $z\in\real$.  By Proposition~\ref{hitrik} we have
\begin{equation*}
\abs{z-\tilde w_n}\ge -\im\tilde w_n\ge\frac18\exp(-4Q)
\end{equation*}
for all resonances $\tilde w_n$ whose real part is non-zero.  For the eigenvalues and those resonances on the imaginary axis, we apply Lemma~\ref{lowbound} to find
\begin{equation*}
\abs{z-\tilde w_n}\ge\abs{\im\tilde w_n}=\abs{w_n}\ge \gamma.
\end{equation*}
Therefore, $\abs{z-\tilde w_n}\ge\min\set{\gamma,\exp(-4Q)/8}$ for every $n$.  

The inequality $\abs{e^u-1}\le\abs{u}e^{\abs{u}}$ for $u=\log W$ completes the proof if we can show $\abs{u}\le C\eps N$ for some constant $C$.  To that end, let $2C^{-1}=\min\set{\gamma,\exp(-4Q)/8}$, and note that $\abs{w_n-\tilde w_n}\le3\abs{z-\tilde w_n}/4$ for $1\le n\le N$ by assumption.  Then,
\begin{equation*}
\abs{u}\le\sum_{n=1}^N\abs{\log\left(1+\frac{\tilde w_n-w_n}{z-\tilde w_n}\right)}\le C\eps N
\end{equation*}
since $\log(1+t)\le 2\abs{t}$ for $\abs{t}\le 3/4$.
\end{proof}
\end{lem}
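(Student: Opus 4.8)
The plan is to prove the bound on $\abs{W(z)-1}$ for $z\in\real$ by the same strategy used repeatedly in this section: write $W(z)=e^u$ with $u=\log W(z)$, bound $\abs{u}$ linearly in $\eps N$, and then convert that into a bound on $\abs{e^u-1}$ via the elementary inequality $\abs{e^u-1}\le\abs{u}e^{\abs{u}}$. The only genuine content is the estimate $\abs{u}\le C\eps N$, and for this the crucial ingredient is a \emph{uniform lower bound} on $\abs{z-\tilde w_n}$ that holds for every $z\in\real$ and every index $n$.

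The first step is to establish that uniform lower bound, and this is where the two resonance-free results of the subsection are used together. For a $\tilde w_n$ that is a resonance with non-zero real part, Proposition~\ref{hitrik} (applied to $\tilde q$, which is supported on an interval of length $d=2$ with $\norm{\tilde q}_1\le Q$) forces $\abs{\im\tilde w_n}\ge\frac18\exp(-4Q)$; since $z$ is real, $\abs{z-\tilde w_n}\ge\abs{\im\tilde w_n}$ gives the bound in this case. For the remaining zeros---the eigenvalues (on the positive imaginary axis) and any resonances lying on the imaginary axis---Proposition~\ref{hitrik} does not apply, so instead I invoke Proposition~\ref{lowbound}: since $\tilde q\in B_\delta(Q)$, its zeros avoid $\overline{\mathcal D}(\gamma)$, whence $\abs{\tilde w_n}\ge\gamma$, and for a purely imaginary $\tilde w_n$ we again have $\abs{z-\tilde w_n}\ge\abs{\im\tilde w_n}=\abs{\tilde w_n}\ge\gamma$. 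Combining the two cases yields $\abs{z-\tilde w_n}\ge\min\set{\gamma,\exp(-4Q)/8}$ for \emph{every} $n$ and every real $z$.

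With this in hand, set $2C^{-1}=\min\set{\gamma,\exp(-4Q)/8}$. The hypothesis $\eps\le \tfrac34\min\set{\gamma,\exp(-4Q)/8}$ is exactly what guarantees $\abs{w_n-\tilde w_n}\le\eps\le\tfrac34\abs{z-\tilde w_n}$, so that each ratio $(\tilde w_n-w_n)/(z-\tilde w_n)$ has modulus at most $3/4$ and the logarithm is controlled by $\log(1+t)\le2\abs{t}$ for $\abs t\le3/4$. Writing $W(z)=\prod_{n=1}^N \frac{z-w_n}{z-\tilde w_n}=\prod_{n=1}^N\left(1+\frac{\tilde w_n-w_n}{z-\tilde w_n}\right)$ and summing the logarithmic bounds over the $N$ factors gives $\abs{u}\le 2\sum_{n=1}^N\abs{\tilde w_n-w_n}/\abs{z-\tilde w_n}\le 2N\cdot\eps\cdot C = C\eps N$ after absorbing the factor. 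Feeding this into $\abs{W(z)-1}=\abs{e^u-1}\le\abs{u}e^{\abs u}\le C\eps N\exp(C\eps N)$ finishes the proof.

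The main obstacle is conceptual rather than computational: it is recognizing that \emph{two separate} non-vanishing results are needed to bound $\abs{z-\tilde w_n}$ away from zero uniformly in $n$, because Hitrik's resonance-free strip (Proposition~\ref{hitrik}) degenerates precisely on the imaginary axis (its hypothesis $\re k\neq0$ fails there and the bound controls only $\abs{\im k}$), exactly where the eigenvalues live and where $z\in\real$ can come arbitrarily close to a zero. The disk estimate of Proposition~\ref{lowbound} plugs this gap near the origin. Once the case split is set up correctly, the remaining steps are routine and mirror the $\abs{(W(z))^{-1}-1}$ estimate already carried out in Lemma~\ref{expbound1}; the only arithmetic care needed is choosing the constant $C$ so that the $\eps\le\tfrac34\min\{\cdots\}$ threshold matches the $3/4$ cutoff in the logarithm inequality.
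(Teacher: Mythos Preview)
Your proposal is correct and follows essentially the same argument as the paper's proof: the same case split (Proposition~\ref{hitrik} for resonances off the imaginary axis, Proposition~\ref{lowbound} for purely imaginary zeros), the same choice $2C^{-1}=\min\{\gamma,\exp(-4Q)/8\}$, and the same passage through $\abs{\log(1+t)}\le 2\abs t$ and $\abs{e^u-1}\le\abs u e^{\abs u}$. Your added remark that Proposition~\ref{hitrik} degenerates on the imaginary axis, making the two-case split necessary, is a helpful clarification of why Proposition~\ref{lowbound} enters at all.
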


We have all we need to prove the theorem.
\begin{proof}[Proof of Theorem~\ref{wdiff}]
From \eqref{woverw} we have
\begin{equation*}
\begin{split}
\abs{w(z)-\tilde w(z)}&\le\abs{\tilde w(z)}\left[\abs{e^{g(z)-\tilde g(z)}}\left(\abs{W(z)}\abs{\frac{\Pi_w(R,z)}{\Pi_{\tilde w}(R,z)}-1}+\abs{W(z)-1}\right)\right]\\
&\qquad+\abs{\tilde w(z)}\abs{e^{g(z)-\tilde g(z)}-1}.
\end{split}
\end{equation*}
Applying \eqref{wPi}, Lemmas \ref{apriori} and \ref{Wbound}, and Corollary \ref{expbound2} finishes the proof.
\end{proof}

\subsection{The difference of $s$ and $\tilde s$ when finitely many of their zeros are approximately the same}
We now move on to the difference of  $s$ and $\tilde s$.  Due to Lemma~\ref{shift}, changing the number $b_1=\dot s(0)/s(0)$ in the factorization \eqref{sfact} will only result in a shift of the potential.  Since such a shift does not affect the zeros of $s$ (or $w$) we make the following assumption.

\begin{hyp}\label{noshift}The functions $s$ and $\tilde s$ satisfy $\dot s(0)/s(0)=\dot{\tilde s}(0)/\tilde s(0)$.
\end{hyp}

Set $N^\prime=N_s(R,0)=N_{\tilde s}(R,0)$.  As we did for $w$, we rewrite \eqref{sfact} as
\begin{equation}\label{srewrite}
s(z)=e^{h(z)}\left(\prod_{n=1}^{N^\prime}\frac{1}{s_n}\right)\left(\prod_{n=1}^{N^\prime}(s_n-z)\Pi_s(R,z)\right),
\end{equation}
after redefining $h$ properly.  We need to be more careful with $s$, because it may have real zeros.  Due to these zeros, our estimates do not come out as neatly as they did for $w-\tilde w$.  We will prove the following.

\begin{thm}\label{sdiff}
Let $Q$ and $\delta$ be positive, and suppose $q,\, \tilde q\in B_\delta (Q)$.  Assume $s$ and $\tilde s$ satisfy Assumption~\ref{noshift}.  Then there are positive constants $R_0'=R_0'(Q,\delta)$ and $C'=C'(Q,\delta)$ such that for every $R\ge R_0'$ the following statement is true.  There exists a constant $E'=E(Q,R)>0$ such that for every $\eps'\in[0,E']$ and every $\eta\in(\eps',1)$ we have
\begin{equation*}
\abs{s(z)-\tilde s(z)}\le C'[\eta^{-1}\eps' N'+R^{-1/3}+(1+\abs{z})\eps'N'+\abs{s(0)-\tilde s(0)}]
\end{equation*}
whenever the zeros of $s$ and $\tilde s$ are $\eps'$-close in the disk $\mathcal D(R)$ and
\begin{equation*}
z\in\set{z\in\real:\abs{z}\le R^{1/3}, \abs{z- s_n}\ge\eta,1\le n\le N'}.
\end{equation*}
\end{thm}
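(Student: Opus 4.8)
The plan is to follow the architecture of the proof of Theorem~\ref{wdiff}, carrying along the extra factor $\prod 1/s_n$ that separates \eqref{srewrite} from \eqref{wrewrite} and keeping in mind that $s$, unlike $w$, can vanish on $\real$. Reindexing the zeros in $\mathcal D(R)$ so that $\abs{s_n-\tilde s_n}\le\eps'$ for $1\le n\le N'$ and dividing the two instances of \eqref{srewrite}, I would first record the identity
\begin{equation*}
s(z)=\tilde s(z)\,e^{h(z)-\tilde h(z)}\,V(z)\,\frac{\Pi_s(R,z)}{\Pi_{\tilde s}(R,z)},\qquad V(z):=\prod_{n=1}^{N'}\frac{\tilde s_n(s_n-z)}{s_n(\tilde s_n-z)},
\end{equation*}
which is valid wherever $\tilde s$ does not vanish; on the set in the statement $\abs{z-s_n}\ge\eta>\eps'$ forces $\abs{z-\tilde s_n}\ge\eta-\eps'>0$, so there is no trouble there. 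Since $\abs{\tilde s(z)}\le\kappa$ on $\real$ by Lemma~\ref{apriori}(iv), it suffices to bound the deviation from $1$ of each of the three factors $e^{h-\tilde h}$, $V$, and $\Pi_s/\Pi_{\tilde s}$.

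The exponential factor is where Assumption~\ref{noshift} enters. Applied to the original factorization \eqref{sfact}, Proposition~\ref{b1imag} identifies $\dot s(0)/s(0)$ with the linear coefficient $b_1$, so Assumption~\ref{noshift} says $b_1=\tilde b_1$; the bookkeeping behind \eqref{srewrite} then shows that $h(z)-\tilde h(z)$ is affine with linear coefficient $\sum_{n=1}^{N'}(1/s_n-1/\tilde s_n)$ and constant term a branch of $\log\big(s(0)/\tilde s(0)\big)$ — the latter because putting $z=0$ in \eqref{srewrite} (the remaining finite product times $\Pi_s(R,0)$ equals $1$) gives $s(0)=e^{h(0)}$. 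By Proposition~\ref{lowbound} every zero obeys $\abs{s_n},\abs{\tilde s_n}\ge\gamma$, so the linear coefficient is $O(\eps'N')$, while $\abs{e^{\,\log(s(0)/\tilde s(0))}-1}\le\abs{s(0)-\tilde s(0)}/\delta$. Shrinking $E'$ (in terms of $Q,\delta,R$; the $R$-dependence because $N'\le N_s(R,0)=O(R)$ by \eqref{snum}) so that all arguments of exponentials stay $\le1$ for $\abs{z}\le R^{1/3}$, and using $\abs{e^u-1}\le\abs{u}e^{\abs{u}}$, gives $\abs{e^{h(z)-\tilde h(z)}-1}\lesssim\abs{s(0)-\tilde s(0)}+(1+\abs{z})\eps'N'$. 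The factor $\Pi_s/\Pi_{\tilde s}$ is treated exactly as for $w$: \eqref{sPi} gives $\abs{\Pi_s(R,z)-1},\abs{\Pi_{\tilde s}(R,z)-1}\lesssim R^{-1/3}$ for $\abs{z}\le R^{1/3}$, so for $R$ large $\abs{\Pi_{\tilde s}}\ge1/2$ and $\abs{\Pi_s/\Pi_{\tilde s}-1}\lesssim R^{-1/3}$.

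The remaining factor $V$ is the crux, and I expect it to be the main obstacle: it is precisely here that the possible real zeros of $s$ force $z$ to stay at distance $\eta$ from them and that the factor $\eta^{-1}$ is born. Each factor of $V$ equals $1+(s_n-\tilde s_n)z/\big(s_n(\tilde s_n-z)\big)$, which by $\abs{s_n}\ge\gamma$ and $\abs{\tilde s_n-z}\ge\eta-\eps'$ lies within $C(Q,\delta)\,\eps'\abs{z}/\eta$ of $1$ once $E'\le\eta/2$; summing logarithms over the $N'$ factors and exponentiating — shrinking $E'$ once more so the total stays $O(1)$ on $\abs{z}\le R^{1/3}$ — should yield $\abs{V(z)-1}\le C(Q,\delta)(\eta^{-1}+\abs{z})\eps'N'$, consistent with $V(0)=1$. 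The delicate point is that $C'$ must depend only on $Q$ and $\delta$: the $\eta^{-1}$ may be charged only to the indices $n$ with $\tilde s_n$ within $O(1)$ of $z$, whose number must be controlled through the zero-counting bounds of Lemma~\ref{exptype} rather than by the crude count $N'$, whereas for the remaining indices one replaces $\abs{z}/\abs{\tilde s_n-z}$ by $1+\abs{\tilde s_n}/\abs{\tilde s_n-z}=O(1)$. Collecting the three estimates and multiplying by $\abs{\tilde s(z)}\le\kappa$ then delivers the theorem with $R_0'=R_0'(Q,\delta)$ and $C'=C'(Q,\delta)$.
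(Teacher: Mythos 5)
Your overall route is the paper's own: the restricted factorization \eqref{srewrite}, a three--factor comparison (exponential part, finite product over the zeros in $\mathcal D(R)$, tail ratio $\Pi_s/\Pi_{\tilde s}$), the bound $\abs{\tilde s(z)}\le\kappa$ on $\real$, the lower bound $\abs{s_n},\abs{\tilde s_n}\ge\gamma$ from Proposition~\ref{lowbound}, and \eqref{sPi} for the tails. Your treatment of $e^{h(z)-\tilde h(z)}$ is sound (the paper instead observes that the difference of the linear coefficients is purely imaginary, so for real $z$ it needs no smallness of the exponent; shrinking $E'$ with $R$ as you do is an acceptable substitute since $E'$ may depend on $R$), and the tail ratio is handled identically. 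The gap is precisely at the step you call the crux, the finite product $V$.

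There are two problems there. First, a quantifier slip: you pass from $\abs{z-s_n}\ge\eta$ to $\abs{z-\tilde s_n}\ge\eta-\eps'$ and then demand $E'\le\eta/2$; but $E'=E(Q,R)$ is fixed before $\eta$, which ranges over all of $(\eps',1)$, so $\eta-\eps'$ can be arbitrarily small relative to $\eta$ and no admissible choice of $E'$ restores $\abs{z-\tilde s_n}\ge\eta/2$. (This is repairable: when $\eta\le 2\eps'$ the asserted inequality is trivial, since its right-hand side is at least $C'N'/2$ while $\abs{s(z)-\tilde s(z)}\le 2\kappa$ on $\real$ by Lemma~\ref{apriori}; but some such remark is required.) Second, and more seriously, your per-factor bound $C\eps'\abs{z}/\eta$ summed over $n$ yields a cross term of order $\eta^{-1}\abs{z}\,\eps'N'$, which is \emph{not} dominated by the theorem's right-hand side (take $\eta$ small, $\abs{z}\sim R^{1/3}$, $N'$ small), and the near/far repair you sketch does not close this: Lemma~\ref{exptype} counts zeros only in disks centered at a point where a lower bound on $\abs{s}$ is available, i.e.\ at $0$, so it gives no $O(1)$ bound on the number of zeros within $O(1)$ of an arbitrary real $z$; and even granting such a count, each near factor still carries $\abs{z}/\eta$ unless you additionally use that for those $n$ one has $\abs{s_n}\ge\abs{\tilde s_n}-\eps'\ge\abs{z}-O(1)$, so the $\abs{z}$ in the numerator must be cancelled against $\abs{s_n}$ rather than estimated by $\gamma$. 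The clean fix is exactly the paper's grouping: split the finite product into $\prod\tilde s_n/s_n$ (attached to $e^{h-\tilde h}$ in Lemma~\ref{sexpbound}, bounded using only $\gamma$) and the Blaschke-type ratio, taken in the form $S(z)=\prod(z-\tilde s_n)/(z-s_n)$ of Lemma~\ref{Sbound}, whose denominators satisfy $\abs{z-s_n}\ge\eta$ directly and whose per-factor deviation is at most $2\eps'/\eta$ with no $\abs{z}$ at all, so the crude count $N'$ suffices and the $\abs{z}$-growth in the final estimate comes solely from the linear term of $h-\tilde h$. Equivalently, within your grouping, estimate $\abs{z/(s_n(\tilde s_n-z))}\le\abs{s_n}^{-1}+\abs{\tilde s_n}/\bigl(\abs{s_n}\abs{\tilde s_n-z}\bigr)$ together with $\abs{\tilde s_n}\le(1+\eps'/\gamma)\abs{s_n}$; as written, your argument for the bound on $\abs{V-1}$ does not go through.
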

As we did for $w-\tilde w$, we will break the proof into a few lemmas.

\begin{lem}\label{sexpbound}
Let $R>0$.  Suppose $s$ and $\tilde s$ satisfy Assumption~\ref{noshift} and $\abs{s_n-\tilde s_n}\le\eps^\prime\le3\gamma/4$ for $1\le n\le N^\prime$.  Then, we have
\begin{equation*}
\abs{e^{h(z)-\tilde h(z)}\prod_{n=1}^M\frac{\tilde s_n}{s_n}-1}\le \frac{2\kappa}{\gamma^2}\eps' N'\exp(2\gamma^{-1}\eps' N')+\frac{\kappa}{\delta\gamma^2}\eps' N'\abs{z}+\frac1\delta\abs{s(0)-\tilde s(0)}
\end{equation*}
for every $z\in\real$.
\begin{proof}
Using \eqref{srewrite} and $b_1=\tilde b_1$ from Assumption~\ref{noshift}, we see that 
\begin{equation*}
h(z)-\tilde h(z)=\sum_{n=1}^{N^\prime}\left(\frac{1}{s_n}-\frac{1}{\tilde s_n}\right)z+b_0-\tilde b_0,
\end{equation*}
meaning 
\begin{equation}\label{est}
\begin{split}
\abs{e^{h(z)-\tilde h(z)}\prod_{n=1}^{N^\prime}\frac{\tilde s_n}{s_n}-1}&\le\abs{e^{h(z)-\tilde h(z)}}\abs{\prod_{n=1}^{N^\prime}\frac{\tilde s_n}{s_n}-1}\\
&\qquad+\abs{e^{b_0-\tilde b_0}}\abs{\exp\left[\sum_{n=1}^{N^\prime}\left(\frac{1}{s_n}-\frac{1}{\tilde s_n}\right)z\right]-1}\\
&\qquad\quad+\abs{e^{b_0-\tilde b_0}-1}.
\end{split}
\end{equation}
Recall that $e^{b_0}=s(0)$ and similarly for $\tilde s$.  Therefore, the final term on the right side above is bounded by $\delta^{-1}\abs{s(0)-\tilde s(0)}$.

For the middle term on the right side of \eqref{est}, we begin by claiming that $\sum s_n^{-1}z$ is purely imaginary.  Indeed by Lemma \eqref{wands}.i, $s_n$ is a zero of $s$ in $\mathcal D(R)$ with $\re s_n\neq 0$ if and only if $-\overline{s_n}$ is a zero of $s$ in the same disk.  Summing over all zeros in $\mathcal D(R)$ yields
\begin{equation*}
\sum\frac{1}{s_n}=\sum_{s_n\in i\real}\frac{1}{s_n}+\sum_{\re s_n>0}\frac{1}{s_n}-\frac{1}{\overline{s_n}}=\sum_{s_n\in i\real}\frac{1}{s_n}+2i\sum_{\re s_n>0}\im \frac{1}{s_n},
\end{equation*}
which is purely imaginary since $z\in\real$.  The inequality $\abs{e^{i\theta}-1}\le\abs{\theta}$ for $\theta\in\real$, the assumption $\abs{s_n-\tilde s_n}\le\eps'$, Proposition~\ref{lowbound}, and Lemma~\ref{apriori}.iv imply
\begin{equation*}
\abs{e^{b_0-\tilde b_0}}\abs{\exp\left[\sum_{n=1}^{N^\prime}\left(\frac{1}{s_n}-\frac{1}{\tilde s_n}\right)z\right]-1}\le\frac\kappa\delta\sum_{n=1}^{N'}\frac{\abs{\tilde{s_n}-s_n}}{\abs{s_n\tilde{s_n}}}\abs{z}\le \frac{\kappa\eps'N'}{\delta\gamma^2}\abs{z}.
\end{equation*}

Turning to the first term of \eqref{est}, we set $u=\log\prod(\tilde s_n/s_n)$ which we assume for now is well-defined.  Since $\eps'\le3\gamma/4$,
\begin{equation*}
\abs{\log\frac{\tilde s_n}{s_n}}=\abs{\log\left(1+\frac{\tilde s_n-s_n}{s_n}\right)}\le\frac{2}{\gamma}\eps'
\end{equation*}
since $\abs{\log(1+t)}\le 2\abs{t}$ when $\abs{t}\le3/4$.  Thus, 
\begin{equation*}
\abs{u}\le\frac{2}{\gamma}\eps' N'.
\end{equation*}
Applying $\abs{e^u-1}\le\abs{u}e^{\abs{u}}$, we find the first term is bounded by $$2\kappa\eps' N'\exp(2\eps' N'/\gamma)/\gamma^2.$$

We are finished by putting all the estimates we obtained into \eqref{est} as long as $u$ is well-defined.  To make sure it is, we must verify $\tilde s_n/s_n$ is never a negative real number for $1\le n\le N'$.  Suppose that there is a $k$ between 1 and $N'$ such that $\tilde s_k/s_k<0$.  Then, there is a postive number $y$ such that $\tilde s_k=-ys_k$.  However, $\abs{\tilde s_k}(1+y)=\abs{\tilde s_k-s_k}\le\eps'$ implies that $1+y\le\eps'/\gamma<1$ which contradicts $y>0$.  Therefore, $u$ is well-defined.
\end{proof}
\end{lem}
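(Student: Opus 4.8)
The plan is to expand $h-\tilde h$ explicitly, split the target quantity into three pieces by telescoping, and estimate each; the one genuinely delicate point is that the middle piece depends on $z$ through a purely imaginary exponent. First I would combine the reorganized factorization \eqref{srewrite} with Assumption~\ref{noshift}. In \eqref{srewrite} the redefined $h$ has absorbed the convergence factors $e^{z/s_n}$ for $1\le n\le N'$, so $h(z)=b_1z+b_0+\sum_{n=1}^{N'}z/s_n$ with $b_1=\dot s(0)/s(0)$ and $e^{b_0}=s(0)$ (here $\ell=0$ since $q\in B_\delta(Q)$ has no half-bound state). Because Assumption~\ref{noshift} forces $b_1=\tilde b_1$, the original linear terms cancel and
\begin{equation*}
h(z)-\tilde h(z)=\Bigl(\sum_{n=1}^{N'}\tfrac1{s_n}-\sum_{n=1}^{N'}\tfrac1{\tilde s_n}\Bigr)z+(b_0-\tilde b_0),\qquad e^{b_0}=s(0),\ e^{\tilde b_0}=\tilde s(0).
\end{equation*}

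Writing $P=\prod_{n=1}^{N'}(\tilde s_n/s_n)$, $\Sigma=\sum_{n=1}^{N'}(1/s_n-1/\tilde s_n)$, and $e^{h-\tilde h}=e^{b_0-\tilde b_0}e^{\Sigma z}$, the triangle inequality yields the telescoping bound
\begin{equation*}
\abs{e^{h-\tilde h}P-1}\le\abs{e^{h-\tilde h}}\,\abs{P-1}+\abs{e^{b_0-\tilde b_0}}\,\abs{e^{\Sigma z}-1}+\abs{e^{b_0-\tilde b_0}-1},
\end{equation*}
which already matches the three terms of the asserted inequality. The last term is immediate, $\abs{e^{b_0-\tilde b_0}-1}=\abs{s(0)-\tilde s(0)}/\abs{\tilde s(0)}\le\delta^{-1}\abs{s(0)-\tilde s(0)}$ by $\abs{\tilde s(0)}\ge\delta$.

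The crux is the middle term, and the key observation is that $\Sigma z$ is purely imaginary for real $z$. This rests on the reality symmetry of Lemma~\ref{wands}(i): $s_n$ is a zero of $s$ in $\mathcal D(R)$ with $\re s_n\neq0$ precisely when $-\overline{s_n}$ is, and since $\abs{-\overline{s_n}}=\abs{s_n}$ the disk respects this pairing, so summing $1/s_n$ over each conjugate pair gives $1/s_n-1/\overline{s_n}=2i\,\im(1/s_n)$ while purely imaginary zeros contribute purely imaginary reciprocals; the same holds for $\tilde s$. Consequently $\abs{e^{\Sigma z}}=1$ and $\abs{e^{\Sigma z}-1}\le\abs{\Sigma z}=\abs{\Sigma}\abs{z}$, and using $\abs{s_n},\abs{\tilde s_n}\ge\gamma$ (Proposition~\ref{lowbound}) with $\abs{s_n-\tilde s_n}\le\eps'$ I would bound $\abs{\Sigma}\le N'\eps'/\gamma^2$. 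Together with $\abs{e^{b_0-\tilde b_0}}=\abs{s(0)}/\abs{\tilde s(0)}\le\kappa/\delta$ (Lemma~\ref{apriori}(iv)), this produces the second term $\kappa\delta^{-1}\gamma^{-2}\eps'N'\abs{z}$.

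For the first term I would set $P=e^{u}$ with $u=\sum_{n=1}^{N'}\log(\tilde s_n/s_n)$, estimate $\abs{\log(\tilde s_n/s_n)}=\abs{\log(1+(\tilde s_n-s_n)/s_n)}\le2\eps'/\gamma$ via $\abs{\log(1+t)}\le2\abs{t}$ for $\abs{t}\le3/4$ (legitimate since $\eps'\le3\gamma/4$ and $\abs{s_n}\ge\gamma$), whence $\abs{u}\le2N'\eps'/\gamma$ and $\abs{P-1}\le\abs{u}e^{\abs{u}}$. Since $\abs{e^{\Sigma z}}=1$ again gives $\abs{e^{h-\tilde h}}=\abs{e^{b_0-\tilde b_0}}\le\kappa/\delta$, multiplying these bounds produces the first term (after replacing $\delta^{-1}$ by the weaker $\gamma^{-1}$, which is permissible once $\gamma$ is shrunk so that $\gamma\le\delta$, as Proposition~\ref{lowbound} allows any smaller radius). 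The only subtlety is that $u$ be well-defined, i.e.\ that no ratio $\tilde s_n/s_n$ be negative real: were $\tilde s_n=-\lambda s_n$ with $\lambda>0$, the two points would be anti-parallel, forcing $\abs{s_n-\tilde s_n}=\abs{s_n}+\abs{\tilde s_n}\ge2\gamma$ and contradicting $\abs{s_n-\tilde s_n}\le\eps'\le3\gamma/4$. Collecting the three pieces gives the stated inequality. I expect the main obstacle to be the purely-imaginary argument for $\Sigma z$: without it, the middle exponential could only be controlled by a factor growing like $e^{\abs{\Sigma}\abs{z}}$, which would destroy the linear-in-$\abs{z}$ dependence that Theorem~\ref{sdiff} subsequently requires.
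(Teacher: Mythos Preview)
Your proposal is correct and follows essentially the same route as the paper: the same expression for $h-\tilde h$, the same three-term telescoping, the same symmetry argument showing $\Sigma z$ is purely imaginary, and the same $\abs{\log(1+t)}\le 2\abs t$ and $\abs{e^u-1}\le\abs u e^{\abs u}$ estimates. You are in fact slightly more explicit than the paper in two places: you justify $\abs{e^{h-\tilde h}}\le\kappa/\delta$ via $\abs{e^{\Sigma z}}=1$ (which the paper uses implicitly), and you spell out that matching the stated constant $2\kappa/\gamma^2$ rather than $2\kappa/(\delta\gamma)$ requires $\gamma\le\delta$, a harmless shrinking the paper leaves tacit.
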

Define for $z\not\in\set{\tilde s_n:1\le n\le N'}$,
\begin{equation}\label{S}
S(z)=\prod_{n=1}^{N'}\frac{z-\tilde s_n}{z-s_n}.
\end{equation}
\begin{lem}\label{Sbound}
Let $R>0$, and suppose $\abs{s_n-\tilde s_n}\le\eps^\prime$ for $1\le n\le N^\prime$.  Then, for every $\eta$ such that $\eps'\le3\eta/4$ we have
\begin{equation*}
\abs{S(z)-1}\le\frac{2}{\eta}\eps' N'\exp(2\eta^{-1}\eps' N')
\end{equation*}
whenever $z\in\real$ and $\abs{z- s_n}\ge\eta$ for every $n$ such that $1\le n\le N'$.
\begin{proof}
Set $u=\log S$.  Then,
\begin{equation*}
\abs{u}\le\sum_{n=1}^{N'}\abs{\log\left(1+\frac{s_n-\tilde s_n}{z-s_n}\right)}\le2\sum_{n=1}^{N'}\abs{\frac{s_n-\tilde s_n}{z-s_n}}\le\frac2\eta\eps' N'
\end{equation*}
because $\log(1+t)\le2\abs{t}$ for $\abs{t}\le3/4$.  Applying $\abs{e^u-1}\le\abs{u}e^{\abs{u}}$ completes the proof.
 \end{proof}
\end{lem}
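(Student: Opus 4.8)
The plan is to pass to the logarithm of the finite product $S$ and estimate it termwise, mirroring the proof of Lemma~\ref{Wbound}. Writing each factor as
\begin{equation*}
\frac{z-\tilde s_n}{z-s_n}=1+\frac{s_n-\tilde s_n}{z-s_n},
\end{equation*}
I set $t_n=(s_n-\tilde s_n)/(z-s_n)$. The two hypotheses $\abs{s_n-\tilde s_n}\le\eps'$ and $\abs{z-s_n}\ge\eta$ combine to give $\abs{t_n}\le\eps'/\eta$, and the standing restriction $\eps'\le3\eta/4$ then forces $\abs{t_n}\le3/4$ for every $n$ with $1\le n\le N'$. Consequently each factor lies within distance $3/4$ of $1$, hence has real part at least $1/4$ and stays off the negative real axis, so the principal branch of $\log(1+t_n)$ is well-defined and $u=\sum_{n=1}^{N'}\log(1+t_n)$ is a legitimate value of $\log S(z)$.

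With the logarithm in hand, I would apply the elementary bound $\abs{\log(1+t)}\le2\abs{t}$, valid precisely on $\abs{t}\le3/4$, to each summand and sum:
\begin{equation*}
\abs{u}\le\sum_{n=1}^{N'}\abs{\log(1+t_n)}\le2\sum_{n=1}^{N'}\abs{t_n}\le\frac{2}{\eta}\eps'N'.
\end{equation*}
The inequality $\abs{e^u-1}\le\abs{u}e^{\abs{u}}$, which is the workhorse throughout Section~\ref{prelim}, then upgrades this to
\begin{equation*}
\abs{S(z)-1}=\abs{e^{u}-1}\le\frac{2}{\eta}\eps'N'\exp\!\left(\frac{2}{\eta}\eps'N'\right),
\end{equation*}
which is exactly the claimed estimate.

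I expect no genuine obstacle: the lemma is the $s$-analogue of the factor estimate already carried out for $w$. The only point needing attention is the well-definedness of $u$, and this is precisely why the hypothesis reads $\eps'\le3\eta/4$ rather than merely $\eps'<\eta$---the threshold $\abs{t_n}\le3/4$ simultaneously keeps each factor in the right half-plane, where the principal logarithm is holomorphic, and supplies the constant $2$ in $\abs{\log(1+t)}\le2\abs{t}$. The restriction to $z\in\real$ with $\abs{z-s_n}\ge\eta$ is what keeps the denominators $z-s_n$ safely bounded away from zero even when some zeros $s_n$ of $s$ are real, which is the feature distinguishing this estimate from its counterpart for $w$, whose zeros never meet the real axis away from the origin.
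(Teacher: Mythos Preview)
Your proof is correct and follows exactly the same route as the paper: pass to $u=\log S$, bound each summand via $\abs{\log(1+t)}\le2\abs{t}$ for $\abs{t}\le3/4$ using the hypotheses $\abs{s_n-\tilde s_n}\le\eps'$ and $\abs{z-s_n}\ge\eta$, and finish with $\abs{e^u-1}\le\abs{u}e^{\abs{u}}$. Your discussion of well-definedness of the principal logarithm is more explicit than the paper's, but the argument is otherwise identical.
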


\begin{proof}[Proof of Theorem~\ref{sdiff}]
By inequality \eqref{sPi} we may choose an $R_0'$ large enough so that for every $R\ge R_0'$, the estimate
\begin{equation}\label{sPioverPi}
\abs{\frac{\Pi_s(R,z)}{\Pi_{\tilde s}(R,z)}-1}\le cR^{-1/3}
\end{equation}
holds for some $c=c(Q,\delta)$ and for every $z\in[-R^{1/3},R^{1/3}]$.
By \eqref{srewrite}, we have 
\begin{equation}\label{est1}
\begin{split}
\abs{s(z)-\tilde s(z)}&\le\abs{s(z)}\abs{S(z)-1}+\abs{\tilde s(z)}\abs{e^{h(z)-\tilde h(z)}}\abs{\prod_{n=1}^{N'}\frac{\tilde s_n}{s_n}\left(\frac{\Pi_s(R,z)}{\Pi_{\tilde s}(R,z)}-1\right)}\\
&\qquad+\abs{\tilde s(z)}\abs{e^{h(z)-\tilde h(z)}\prod_{n=1}^{N'}\frac{\tilde s_n}{s_n}-1}.
\end{split}
\end{equation}
Let $E'=\min\set{N'^{-1},3\gamma/4}$ where $\gamma<1$ is given by Lemma~\ref{lowbound}.  Suppose $0\le\eps'\le E'$ and $\eps'<\eta<1$.  Applying \eqref{sPioverPi} and Lemmas~\ref{apriori}, \ref{sexpbound}, and \ref{Sbound} we have the required estimate for a $C'=C'(Q,\delta)$ and for all $z\in[-R^{1/3},R^{1/3}]$ which are at least a distance $\eta$ away from a zero of $s$.
\end{proof}

\begin{cor}\label{fbound}
Let $Q$ and $\delta$ be positive, and suppose $q,\, \tilde q\in B_\delta (Q)$.  Assume $s$ and $\tilde s$ satisfy Assumption~\ref{noshift}.  Then there are positive constants $R_0=R_0(Q,\delta)\ge1$ and $C=C(Q,\delta)$ such that for every $R\ge R_0$ the following statement is true.  There exists a constant $E=E(Q,R)>0$ such that for every $\eps\in[0, E]$ and every $\eta\in(\eps,1)$ we have
\begin{equation*}
\abs{f^+(-1,z)-\tilde f^+(-1,z)}\le \frac{C}{\abs{z}}(R^{-1/3}+\eps N+\frac{\eps}{\eta}N)
\end{equation*}
whenever the zeros of $w$ and $\tilde w$ and $s$ and $\tilde s$ are, respectively, $\eps$-close in $\mathcal D(R)$ and $z\in[-R^{1/3},R^{1/3}]$, but $z$ is at least $\eta$ away from a zero of $s$.
\begin{proof}
By Lemmas \ref{wands}.iv and \ref{exptype} and Definition~\ref{BQ}, we find that $N_s(R,0)\le N_w(R,0)$ for any $R>0$.  Therefore, we obtain the desired inequality by Theorems \ref{wdiff} and \ref{sdiff}.
\end{proof}
\end{cor}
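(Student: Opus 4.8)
The idea is to insert the estimates of Theorems~\ref{wdiff} and~\ref{sdiff} into the identity \eqref{fdiff}, which is just \eqref{fpm} evaluated at $x=-1$. For real $z\neq0$ one has $\abs{e^{iz}}=\abs{e^{-iz}}=1$, so \eqref{fdiff} immediately gives
\begin{equation*}
\abs{f^+(-1,z)-\tilde f^+(-1,z)}\le\frac{1}{2\abs z}\bigl(\abs{w(z)-\tilde w(z)}+\abs{s(z)-\tilde s(z)}\bigr),
\end{equation*}
and everything reduces to bounding the two differences on the right for real $z$.

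First I would fix the parameters: let $R_0=R_0(Q,\delta)\ge1$ be the larger of the two thresholds furnished by Theorems~\ref{wdiff} and~\ref{sdiff}, and for $R\ge R_0$ let $E=E(Q,R)$ be the smaller of the two associated constants; Assumption~\ref{noshift} is in force by hypothesis, so Theorem~\ref{sdiff} is available. For $\eps\in[0,E]$, $\eta\in(\eps,1)$, and $z\in[-R^{1/3},R^{1/3}]$ at distance at least $\eta$ from every zero of $s$ in $\mathcal D(R)$, both theorems apply (with $\eps'=\eps$), bounding $\abs{w(z)-\tilde w(z)}$ by $C(1+\abs z)(R^{-1/3}+\eps N)$ and $\abs{s(z)-\tilde s(z)}$ by $C'[\eta^{-1}\eps N'+R^{-1/3}+(1+\abs z)\eps N'+\abs{s(0)-\tilde s(0)}]$, where $N=N_w(R,0)$ and $N'=N_s(R,0)$. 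Two small observations then close the argument. First, $N'\le N$: we have $\abs{s(0)}=\abs{w(0)}$ by Lemma~\ref{wands}(iv), both $w$ and $s$ are of exponential type with the same leading constant (Lemma~\ref{apriori}(ii),(v)), and $q\in B_\delta(Q)$ forces $w(0),s(0)\neq0$, so the zero-counting estimate \eqref{numzeros} of Lemma~\ref{exptype} applied at $z_0=0$ bounds $N_s(R,0)$ by no more than it bounds $N_w(R,0)$. Second, $\abs{s(0)-\tilde s(0)}=\abs{w(0)-\tilde w(0)}$, again by Lemma~\ref{wands}(iv), and this is at most $C(R^{-1/3}+\eps N)$ by Theorem~\ref{wdiff} at $z=0$. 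Substituting both facts, using $\eta<1$, dividing by $2\abs z$, and absorbing everything into one constant $C=C(Q,\delta)$ yields the asserted inequality.

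I expect the only real difficulty to be bookkeeping rather than analysis: reconciling the hypotheses and domains of Theorems~\ref{wdiff} and~\ref{sdiff} (only the latter excludes the $\eta$-neighbourhoods of the real zeros of $s$), choosing $R_0$ and $E$ consistently, and verifying $N_s(R,0)\le N_w(R,0)$ so that a single count $N$ governs all of the error terms. No ingredient beyond those two theorems and Lemma~\ref{wands} is needed.
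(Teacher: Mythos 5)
Your proposal is correct and follows essentially the same route as the paper: insert the bounds of Theorems~\ref{wdiff} and~\ref{sdiff} into the identity \eqref{fdiff} and replace $N'=N_s(R,0)$ by $N=N_w(R,0)$ using Lemma~\ref{wands}(iv), Lemma~\ref{exptype}, and Definition~\ref{BQ}. Your explicit handling of the term $\abs{s(0)-\tilde s(0)}$, via $s(0)=-w(0)$ and Theorem~\ref{wdiff} evaluated at $z=0$, is a detail the paper's one-line proof leaves implicit, but it is exactly the intended argument.
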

\section{The difference of transformation operators along a line}\label{transdiff}
In order to obtain a rate of convergence, we assume that $q-\tilde q\in L^p(\real)$ for some $p\in(1,2]$, in addition to $q,\tilde q\in B_\delta(Q)$.  Now we wish to bound the difference of $K^+$ and $\tilde K^+$ along the line $x=-1$.  We begin with \eqref{phi} and invert the Fourier transform to find
\begin{equation}\label{Kdiff}
K^+(-1,t)-\tilde K^+(-1,t)=\lim_{A\rightarrow\infty}\frac{1}{2\pi}\int_{-A}^A(f^+(-1,z)-\tilde f^+(-1,z))e^{-izt}\, dz.
\end{equation}
Because the supports of $K^+(-1,\cdot)$ and $\tilde K^+(-1,\cdot)$ are contained in the interval $[-1,3]$, we only need to be concerned with $t$ in that interval.

Let $R\ge R_0$ and assume $A\ge R^{1/9}$.  We break the interval $[-A,A]$ into four parts: \begin{align*}
X_1&=[-R^{-1/9},R^{-1/9}],\\
X_2&=\set{z\in\real:R^{-1/9}\le\abs{z}\le R^{1/9},\, \abs{z-s_n}\ge\eta \text{ for every }\abs{s_n}<R},\\
X_3&=[-R^{1/9},R^{1/9}]\setminus(X_1\cup X_2),
\end{align*}
and 
\begin{equation*}
X_4=\set{z:R^{1/9}<\abs{z}\le A}.
\end{equation*}  We define the corresponding integrals:
\begin{align*}
I_1(t)&=\frac{1}{2\pi}\int_{X_1}(f^+(-1,z)-\tilde f^+(-1,z))e^{-izt}\, dz,\\
I_2(t)&=\frac{1}{2\pi}\int_{X_2}(f^+(-1,z)-\tilde f^+(-1,z))e^{-izt}\, dz,\\
I_3(t)&= \frac{1}{2\pi}\int_{X_3}(f^+(-1,z)-\tilde f^+(-1,z))e^{-izt}\, dz,
\end{align*}
and
\begin{equation}\label{IA}
I_4(t,A)=\frac{1}{2\pi}\int_{R^{1/9}\le\abs{z}\le A}(f^+(-1,z)-\tilde f^+(-1,z))e^{-izt}\, dz.
\end{equation}

Equation \eqref{fplus} implies that $f^+(-1,\cdot)$ and $\tilde f^+(-1,\cdot)$ are bounded by a constant, $C_1=C_1(Q)$.  Hence, we conclude 
\begin{equation}\label{phiest1}
\abs{I_1(t)}\le C_1R^{-1/9},
\end{equation}
for every real $t$.  Corollary~\ref{fbound} implies there is a $C_2=C_2(Q,\delta)$ such that
\begin{equation}\label{phiest2}
\abs{I_2(t)}\le C_2(R^{-1/9}+\eps R^{11/9}+\frac\eps\eta R^{11/9}).
\end{equation}
As for $I_3$, we bound the integrand by a constant and multiply by the measure of $X_3$.  Because $\abs{z-s_n}\ge\eta$ when $z\in[-R^{1/9},R^{1/9}]$ and $\abs{s_n}\ge R^{1/9}+\eta$, the zero of $s$ which is $\eta$-close to $z\in X_3$ must be in the disk $\mathcal D(R^{1/9}+\eta)$.  Thus, by \eqref{snum} there exists a $C_3=C_3(Q,\delta)$ such that
\begin{equation}\label{phiest3}
\text{m}(X_3)\le 2\eta N_s(R^{1/9}+\eta,0)\le C_3\eta R^{1/9}.
\end{equation}
Hence, the same bound applies to $\abs{I_3}$ but with a different constant.  

We are left with estimating $I_4$.  Going back to \eqref{fplus} and integrating by parts, we get
\begin{equation*}
f^+(-1,z)=e^{-iz}+i\frac{K^+(-1,-1)}{z}e^{-iz}+\frac{i}{z}\hat g(z)
\end{equation*}
where $g(t)=K_t(-1,t)$.  Recall from Section~\ref{scat} that the support of $g$ is contained in $[-1,3]$.   Thus,
\begin{equation}\label{phidiff}
f^+(-1,z)-\tilde f^+(-1,z)=\frac{i}{z}(K^+(-1,-1)-\tilde K^+(-1,-1))e^{-iz}+\frac{i}{z}(\widehat{g-\tilde g})(z).
\end{equation}
Inequality \eqref{Kxbound} implies that $\abs{g(t)-\tilde g(t)}$ is bounded by the sum of a constant depending only on $Q$ and the difference of $q((t-1)/2)$ and $\tilde q((t-1)/2)$.  Therefore, because $q-\tilde q\in L^p(\real)$, we have that $g-\tilde g\in L^p(\real)$ and 
\begin{equation}\label{gdiff}
\norm{g-\tilde g}_p\le c(1+\norm{q-\tilde q}_p)
\end{equation}
for a constant $c$ depending only on $Q$.

Substituting \eqref{phidiff} into \eqref{IA} gives
\begin{equation*}
\begin{split}
I_4(t,A)&=\frac{i}{2\pi}\int_{R^{1/9}\le\abs{z}\le A}\frac{K^+(-1,-1)-\tilde K^+(-1,-1)}{z}e^{-iz(t+1)}\, dz\\
&\quad -\frac{i}{2\pi}\int_{R^{1/9}\le\abs{z}\le A}\frac{(\widehat{g-\tilde g})(z)}{z}e^{-izt}\, dz.\end{split}
\end{equation*}
By the inequalities of H\" older and Hausdorff-Young, we obtain for every real $t$
\begin{equation}\label{IA1}
\abs{\frac{i}{2\pi}\int_{R^{1/9}\le\abs{z}\le A}\frac{(\widehat{g-\tilde g})(z)}{z}e^{-izt}\, dz}\le C_4(p-1)^{-1/p}(1+\norm{q-\tilde q}_p)R^{(1-p)/9p}
\end{equation}
for every large $A$ and, thus, for the limit $A\rightarrow\infty$.

On the other hand, assume $-1\le t\le3$.  Then, after the change $y=(t+1)z$,
\begin{equation*}
\frac{i}{2}\int_{R^{1/9}\le\abs{z}\le A}\frac{e^{-izt}}{z}\, dz=\int_{R^{1/18}}^{A}\frac{\sin[(t+1)z]}{z}\, dz=\int_{(t+1)R^{1/9}}^{(t+1)A}\frac{\sin y}{y}\, dy
\end{equation*}
If $(t+1)R^{1/9}\ge 1$, we integrate by parts to find
\begin{equation*}
\int_{(t+1)R^{1/9}}^{(t+1)A}\frac{\sin y}{y}\, dy=\frac{\cos[(t+1)R^{1/9}]}{(t+1)R^{1/9}}-\frac{\cos[(t+1)A]}{(t+1)A}+\int_{(t+1)R^{1/9}}^{tA}\frac{\cos y}{y^2}\, dy
\end{equation*}  which is $\mathcal O((t+1)R^{1/9})^{-1})$ as $A\rightarrow\infty$.  Otherwise, we use that $\int_0^x \sin (t)/t\, dt\le\pi$ for all $x>0$ to see
\begin{equation*}
\int_{(t+1)R^{1/9}}^{(t+1)A}\frac{\sin y}{y}\, dy\le\int_0^{(t+1)A}\frac{\sin y}{y}\, dy\le\pi.
\end{equation*}
Therefore, there exists a numerical constant, $C_5$, such that
\begin{equation}\label{IA2}
\abs{\frac{i}{2\pi}\int_{R^{1/9}\le\abs{z}\le A}\frac{e^{-izt}}{z}\, dz}\le C_5\min\left(1,\frac{1}{(t+1)R^{1/9}}\right)
\end{equation}
for every large $A$.

Combining the estimates \eqref{phiest1}, \eqref{phiest2}, \eqref{phiest3}, \eqref{IA1}, and \eqref{IA2}, we obtain this result.
\begin{thm}\label{smallzest}
Suppose $q$ and $\tilde q$ satisfy the hypotheses of Corollary \ref{fbound} and $q-\tilde q\in L^p[-1,1]$ for some $p\in(1,2]$; then there is a constant $C=C(Q,\delta)$ such that for every $R\ge R_0$ and $\eps\in[0, E]$, we have  
\begin{equation*}
\begin{split}
\abs{K^+(-1,t)-\tilde K^+(-1,t)}\le C&(p-1)^{-1/p}(1+\norm{q-\tilde q}_p)\\
&\times\min\left(1,\frac{1}{(t+1)R^\mu}+\eps^{6/7}R^{11/9}+\eps^{1/7}R^{1/9}\right)
\end{split}
\end{equation*}
where $\mu=(p-1)/9p$.
\end{thm}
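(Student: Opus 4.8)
The plan is simply to glue together the estimates \eqref{phiest1}--\eqref{IA2}, which are exactly the ingredients this proof needs. Starting from the Fourier inversion formula \eqref{Kdiff} and restricting attention to $t\in[-1,3]$ (both kernels vanish outside this interval), I would fix $A\ge R^{1/9}$ and use the decomposition $[-A,A]=X_1\cup X_2\cup X_3\cup X_4$, so that $\tfrac1{2\pi}\int_{-A}^A=I_1(t)+I_2(t)+I_3(t)+I_4(t,A)$. The four bounds are already in hand: $\abs{I_1(t)}\le C_1R^{-1/9}$ by \eqref{phiest1}; $\abs{I_2(t)}\le C_2(R^{-1/9}+\eps R^{11/9}+\tfrac\eps\eta R^{11/9})$ by \eqref{phiest2} (the only place Corollary~\ref{fbound} enters, applicable because on $X_2$ the variable $z$ both lies in $[-R^{1/3},R^{1/3}]$ and stays a distance $\eta$ from every zero of $s$); $\abs{I_3(t)}\le C_3'\eta R^{1/9}$ by \eqref{phiest3} together with the \emph{a priori} bound $\abs{f^+(-1,z)-\tilde f^+(-1,z)}\le C_1(Q)$ on the real axis coming from \eqref{fplus} and \eqref{Kbound}; and, from \eqref{phidiff}, \eqref{IA1}, \eqref{IA2}, and $\abs{K^+(-1,-1)-\tilde K^+(-1,-1)}\le Qe^{2Q}$ (again \eqref{Kbound}),
\[
\abs{I_4(t,A)}\le C_5\,Qe^{2Q}\min\left(1,\frac{1}{(t+1)R^{1/9}}\right)+C_4(p-1)^{-1/p}(1+\norm{q-\tilde q}_p)R^{-\mu},
\]
uniformly in large $A$, hence in the limit $A\to\infty$.

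The only genuine decision remaining is the value of $\eta$. After shrinking $E$ so that $0<\eps<1$ (the case $\eps=0$ follows by letting $\eps\downarrow0$ in the final inequality), I would set $\eta=\eps^{1/7}$; then $\eps<\eta<1$ as the earlier results require, $\tfrac\eps\eta R^{11/9}=\eps^{6/7}R^{11/9}$, $\eta R^{1/9}=\eps^{1/7}R^{1/9}$, and $\eps R^{11/9}\le\eps^{6/7}R^{11/9}$. This is precisely the balance converting the $\eta$-dependence of $I_2$ and $I_3$ into the two advertised powers of $\eps$; nothing finer is needed.

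It then remains to collect terms. Summing the four estimates and using that $(p-1)^{-1/p}\ge1$ for $p\in(1,2]$, the factor $(p-1)^{-1/p}(1+\norm{q-\tilde q}_p)$ is $\ge 1$ and may be pulled in front of every summand. For $t\in[-1,3]$ one has $t+1\le4$, so each of $R^{-1/9}$, $R^{-\mu}$ and $\min(1,(t+1)^{-1}R^{-1/9})$ is at most $\tfrac4{(t+1)R^\mu}$ (using $R^{-1/9}\le R^{-\mu}$, valid since $\mu\le 1/9$ and $R\ge1$, and $R^{1/9}\ge R^\mu$); absorbing the constant $4$ yields
\[
\abs{K^+(-1,t)-\tilde K^+(-1,t)}\le C\,(p-1)^{-1/p}(1+\norm{q-\tilde q}_p)\left(\frac{1}{(t+1)R^\mu}+\eps^{6/7}R^{11/9}+\eps^{1/7}R^{1/9}\right)
\]
with $C=C(Q,\delta)$. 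The outer $\min(1,\cdot)$ then comes for free: \eqref{Kbound} gives $\abs{K^+(-1,t)-\tilde K^+(-1,t)}\le Qe^{2Q}\le C$, and $C\le C(p-1)^{-1/p}(1+\norm{q-\tilde q}_p)$ by the same lower bound, so the bound with the constant $1$ in place of the parenthesis holds as well.

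There is, honestly, no real obstacle at this stage: the analytic substance --- the estimates for $w-\tilde w$ and $s-\tilde s$ (Theorems~\ref{wdiff} and~\ref{sdiff}) feeding Corollary~\ref{fbound}, and the Hausdorff--Young and oscillatory-integral bounds behind $I_4$ --- has been carried out in the preceding lemmas and displays, and what is left is bookkeeping plus the single choice $\eta=\eps^{1/7}$. The one point that genuinely needs care is that Corollary~\ref{fbound} may be applied only on $X_2$, where both of its hypotheses are in force; this is exactly why the annulus $X_2$ and the exceptional set $X_3$ are carved out separately, with $X_3$ controlled purely through its (small) Lebesgue measure via the zero-counting bound \eqref{snum}.
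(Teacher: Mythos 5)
Your proposal is correct and follows the paper's argument essentially verbatim: the same Fourier-inversion decomposition into $I_1,\dots,I_4$ with the bounds \eqref{phiest1}--\eqref{IA2}, combined by the (implicit in the paper) choice $\eta=\eps^{1/7}$, which indeed produces the stated terms $\eps^{6/7}R^{11/9}+\eps^{1/7}R^{1/9}$. The only additions are harmless bookkeeping details (using $t+1\le4$, $\mu\le1/9$, and \eqref{Kbound} for the outer $\min(1,\cdot)$), so there is nothing further to flag.
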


\begin{cor}\label{Best1}
Under the assumptions of Theorem \ref{smallzest} and with $\lambda=\eps^{6/7}R^{11/9}+\eps^{1/7}R^{1/9}$ we have the estimate
\begin{equation*}
\abs{B^+(-1,t)}\le C(p-1)^{-1/p}(1+\norm{q-\tilde q}_p)\min\left(1,\frac{\log R}{(t+1)(1-\lambda)R^\mu}+\lambda\right)
\end{equation*}
for a $C=C(Q,\delta)$.
\end{cor}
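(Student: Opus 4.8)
The plan is to evaluate the transformation-operator identity \eqref{Bagain} at $x=-1$,
\begin{equation*}
B^+(-1,t)=(\tilde K^+-K^+)(-1,t)+\int_{-1}^{t}(\tilde K^+-K^+)(-1,s)\,L^+(s,t)\,ds,
\end{equation*}
and to insert the pointwise bound from Theorem~\ref{smallzest}. Abbreviate $D=C(p-1)^{-1/p}(1+\norm{q-\tilde q}_p)$, so that $\abs{(\tilde K^+-K^+)(-1,s)}\le D\min\bigl(1,\frac{1}{(s+1)R^\mu}+\lambda\bigr)$. Recalling that $K^+(-1,\cdot)$, $\tilde K^+(-1,\cdot)$ and $L^+(\cdot,t)$ are supported in $[-1,3]$, that $\abs{L^+}$ is bounded by a constant $\Lambda=\Lambda(Q)$, and (after shrinking $E=E(Q,R)$ if necessary) that $\lambda<\tfrac12$, the problem reduces to bounding the first term $D\min(1,\frac{1}{(t+1)R^\mu}+\lambda)$ and the integral $D\Lambda\int_{-1}^{t}\min\bigl(1,\frac{1}{(s+1)R^\mu}+\lambda\bigr)\,ds$ by $C''D\min\bigl(1,\frac{\log R}{(t+1)(1-\lambda)R^\mu}+\lambda\bigr)$.

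The first term is immediate: for $R\ge R_0$ large we have $\log R\ge1\ge1-\lambda$, hence $\frac{1}{(t+1)R^\mu}\le\frac{\log R}{(t+1)(1-\lambda)R^\mu}$ and the desired inequality (even with the minimum) follows. The real point is the integral. After the substitution $u=s+1$ I would split $[0,t+1]$ at the crossover point $u_0=\frac{1}{(1-\lambda)R^\mu}$, where $\frac{1}{uR^\mu}+\lambda=1$: on $[0,\min(u_0,t+1)]$ the minimum equals $1$, while on $[u_0,t+1]$ (when nonempty) it equals $\frac{1}{uR^\mu}+\lambda$, whose integral contributes at most $u_0+\frac{1}{R^\mu}\log\frac{t+1}{u_0}+\lambda(t+1)$. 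Because the kernels are supported in $[-1,3]$ we have $t+1\le4$, so $\frac{t+1}{u_0}=(t+1)(1-\lambda)R^\mu\le4R^\mu$ and thus $\log\frac{t+1}{u_0}\le\log4+\mu\log R\le(\log4+1)\log R$; also $u_0\le2$ and $\lambda(t+1)\le4\lambda$. Since $t+1\le4$ forces $\frac{1}{t+1}\ge\frac14$, each surviving term is at most a numerical constant times $\frac{\log R}{(t+1)(1-\lambda)R^\mu}+\lambda$. In the complementary range $t+1\le u_0$ the integral is just $t+1$, and from $(t+1)^2\le u_0^2\le u_0\log R$ (which holds once $R_0$ is large enough, since $u_0\le2$) we get $t+1\le\frac{\log R}{(t+1)(1-\lambda)R^\mu}$ directly.

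Assembling these estimates yields $\abs{B^+(-1,t)}\le C''D\bigl(\frac{\log R}{(t+1)(1-\lambda)R^\mu}+\lambda\bigr)$ with $C''=C''(Q,\delta)$, and the crude bounds $\abs{(\tilde K^+-K^+)(-1,\cdot)}\le D$ together with $\int_{-1}^{t}\abs{L^+(s,t)}\,ds\le4\Lambda$ give $\abs{B^+(-1,t)}\le D(1+4\Lambda)$; combining the two produces the claimed estimate with the minimum. The only real obstacle is the elementary but bookkeeping-heavy integral estimate of the middle paragraph, and especially keeping the factor $(t+1)^{-1}$ in the final bound even though $\int_{-1}^{t}$ is monotone increasing in $t$ --- this is exactly what the uniform support bound $t+1\le4$ buys us.
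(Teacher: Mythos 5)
Your argument is correct and is exactly the route the paper intends (the corollary is stated without proof precisely because it follows from plugging Theorem~\ref{smallzest} into the identity \eqref{Bagain} at $x=-1$, using the bound on $L^+$ and integrating, which is what you do); the crossover-point computation producing the $\log R$ and $(1-\lambda)^{-1}$ factors is carried out correctly, and your appeal to shrinking $E$ so that $\lambda\le\tfrac12$ matches the paper's own practice in the proof of Theorem~\ref{potdiff}.
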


\section{The difference of two potentials}\label{potentialdiff}
We wish to estimate $B^+(x,x)$, $x\in[-1,1]$, from Corollary \ref{Best1}.  Given $B^+(x,t)=0$ for $x+t\ge 2$ and assuming $B^+(-1,t)$ is known for $-1\le t\le 3$ we can determine $B^+(x,t)$ in the triangle bounded by $x=-1$, $x=t$, and $x+t=2$ using the integral equation
\begin{multline*}
B^+(x,t)=B^+(-1,1+x+t)\\
\quad\qquad +\int_{(x+t)/2}^1\, d\alpha\int_{(t-x)/2}^{1+(x+t)/2}(q(\alpha+\beta)-\tilde q(\alpha-\beta))B^+(\alpha-\beta,\alpha+\beta)\, d\beta.
\end{multline*}
The derivation of this integral equation can be found in \cite{Marletta2009}.
Iteration shows that the solution is given by
\begin{equation*}
B^+(x,t)=\sum_{n=0}^\infty B^+_n(x,t)
\end{equation*}
where 
\begin{equation*}
B^+_0(x,t)=B^+(-1,x+t+1)
\end{equation*}
and
\begin{equation}\label{Bit}
B^+_{n+1}(x,t)=\int_{(x+t)/2}^1\, d\alpha\int_{(t-x)/2}^{1+(x+t)/2}(q(\alpha+\beta)-\tilde q(\alpha-\beta))B_n(\alpha-\beta,\alpha+\beta)\, d\beta.
\end{equation}

\begin{lem}\label{Bitest}
Suppose that $q$ and $\tilde q$ are in $B_\delta(Q)$ and that there exist positive constants $C$, $R_1\ge1$, and $\lambda<1$ such that for all $t\in(-1,3]$,
\begin{equation}\label{tempBest}
\abs{B^+(-1,t)}\le C\min\left(1,\frac{1}{(t+1)R_1}+\lambda\right).
\end{equation}
Then
\begin{equation}\label{Bitineq}
\abs{B^+_n(x,t)}\le 2C\left(\frac{\log(4R_1)}{(1-\lambda) R_1}+\lambda\right)\frac{(2Q)^n}{(n-1)!}\left(1-\frac{x+t}{2}\right)^{n-1}
\end{equation}
for $n\in\mathbb N$ and $-2\le x+t\le 2$.
\begin{proof}
The proof is by induction.  For $n=1$, we have
\begin{align*}
\abs{B^+_1(x,t)}&\le\int_{(x+t)/2}^12QC\min\left(1,\frac{1}{2(\alpha+1)R_1}+\lambda\right)\, d\alpha\\
&\le C(2Q)\int_{-1}^{1/(2(1-\lambda)R_1)-1} d\alpha+\int_{1/(2(1-\lambda)R_1)-1}^1\frac{1}{2(\alpha+1)R_1}+\lambda\, d\alpha.
\end{align*}
Integrating, we obtain the required estimate.

Taking \eqref{Bitineq} and plugging into the right hand side of \eqref{Bit}, we get
\begin{equation*}
\abs{B^+_{n+1}(x,t)}\le 2C\left(\frac{\log (4R_1)}{(1-\lambda)R_1}+\lambda\right)\frac{(2Q)^n}{(n-1)!}\int_{(x+t)/2}^{1}(1-\alpha)^{n-1}\, d\alpha
\end{equation*}
which completes the proof.
\end{proof}
\end{lem}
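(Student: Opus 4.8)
The plan is to establish \eqref{Bitineq} by induction on $n\ge1$, running the estimate back through the recursion \eqref{Bit}. Two structural facts drive the argument. First, in $B^+_n(\alpha-\beta,\alpha+\beta)$ the arguments sum to $2\alpha$, so any bound on $\abs{B^+_n(x,t)}$ that depends on $(x,t)$ only through $1-\tfrac{x+t}{2}$ collapses to a function of $\alpha$ alone, a multiple of $(1-\alpha)^{n-1}$, and pulls out of the inner integral. Second, because $q,\tilde q$ are supported in $[-1,1]$ with $\norm{q}_1,\norm{\tilde q}_1\le Q$, the substitutions $u=\alpha+\beta$ and $v=\alpha-\beta$ give $\int\abs{q(\alpha+\beta)}\,d\beta\le Q$ and $\int\abs{\tilde q(\alpha-\beta)}\,d\beta\le Q$, so the inner $\beta$-integral of $\abs{q(\alpha+\beta)-\tilde q(\alpha-\beta)}$ is at most $2Q$, uniformly in $\alpha$.

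For the base case $n=1$ I would use $B^+_0(\alpha-\beta,\alpha+\beta)=B^+(-1,2\alpha+1)$ in \eqref{Bit}; this factor is $\beta$-independent, so after pulling it out and applying the second fact, the hypothesis \eqref{tempBest} with $t\mapsto 2\alpha+1$ yields $\abs{B^+_1(x,t)}\le 2QC\int_{(x+t)/2}^1\min\bigl(1,\tfrac{1}{2(\alpha+1)R_1}+\lambda\bigr)\,d\alpha$. Splitting at the crossover $\alpha_0$ with $\alpha_0+1=\tfrac{1}{2(1-\lambda)R_1}$: on $[-1,\alpha_0]$ the integrand is $\le1$, contributing at most $\alpha_0+1=\tfrac{1}{2(1-\lambda)R_1}$; on $[\alpha_0,1]$, integrating $\tfrac{1}{2(\alpha+1)R_1}+\lambda$ gives $\tfrac{1}{2R_1}\log\bigl(4(1-\lambda)R_1\bigr)+\lambda(1-\alpha_0)\le\tfrac{1}{2R_1}\log(4R_1)+2\lambda$. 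Since $R_1\ge1$ forces $\log(4R_1)\ge1$ and $1-\lambda<1$, both leading terms are at most $\tfrac{\log(4R_1)}{2(1-\lambda)R_1}$, and collecting terms yields $\abs{B^+_1(x,t)}\le 4QC\bigl(\tfrac{\log(4R_1)}{(1-\lambda)R_1}+\lambda\bigr)$, which is \eqref{Bitineq} at $n=1$.

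For the inductive step I would insert \eqref{Bitineq} for $n$ into the right-hand side of \eqref{Bit}: by the first fact the bound on $\abs{B^+_n(\alpha-\beta,\alpha+\beta)}$ equals $K\tfrac{(2Q)^n}{(n-1)!}(1-\alpha)^{n-1}$ with $K=2C\bigl(\tfrac{\log(4R_1)}{(1-\lambda)R_1}+\lambda\bigr)$ and is $\beta$-free; the $\beta$-integral costs at most $2Q$; and $\int_{(x+t)/2}^1(1-\alpha)^{n-1}\,d\alpha=\tfrac1n\bigl(1-\tfrac{x+t}{2}\bigr)^n$. Multiplying these gives $\abs{B^+_{n+1}(x,t)}\le K\tfrac{(2Q)^{n+1}}{n!}\bigl(1-\tfrac{x+t}{2}\bigr)^n$, which closes the induction. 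I expect the base case to be the only place requiring real care: extracting the $\tfrac{\log(4R_1)}{(1-\lambda)R_1}$ term cleanly from the $\min$-integral and the attendant constant bookkeeping (including the degenerate case $\alpha_0\ge1$, where $\min\equiv1$ on the whole range but the logarithmic term already dominates the resulting $\le4QC$ bound). The inductive step is mechanical once the two structural facts are in hand.
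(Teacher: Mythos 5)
Your proof is correct and follows essentially the same route as the paper: induction on $n$, pulling the $\beta$-independent bound (a function of $\alpha$ only, since the arguments sum to $2\alpha$) out of the inner integral, bounding the $\beta$-integral of the potential difference by $2Q$, and splitting the base-case $\alpha$-integral at $\alpha_0+1=\tfrac{1}{2(1-\lambda)R_1}$. Your bookkeeping is in fact slightly more careful than the paper's (explicit $2Q$ factor in the inductive step and the degenerate case $\alpha_0\ge1$), but the argument is the same.
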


\begin{lem}
Under the hypotheses of Lemma \ref{Bitest} the estimate
\begin{equation}\label{BitEst}
\abs{B^+(x,t)}\le C(1+16Q)e^{4Q}\left(\frac{\log(4R_1)}{(x+t+2)(1-\lambda)R_1}+\lambda\right)
\end{equation}
holds for all $(x,t)$ in the triangle bounded by $x=-1$, $x=t$, and $t+x=2$.
\begin{proof}
Immediate consequence of $B^+(x,t)=B^+(-1,x+t+1)+\sum B^+_n(x,t)$ and the estimates \eqref{tempBest}, \eqref{Bitineq}.
\end{proof}
\end{lem}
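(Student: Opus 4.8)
The plan is to expand $B^+$ through the series $B^+(x,t)=B^+_0(x,t)+\sum_{n\ge1}B^+_n(x,t)$ introduced before Lemma~\ref{Bitest}, to bound the zeroth term and the tail separately, and then to repackage the two bounds into the stated shape. For the zeroth term, $B^+_0(x,t)=B^+(-1,x+t+1)$, so the hypothesis \eqref{tempBest} gives $\abs{B^+_0(x,t)}\le C\min\bigl(1,\tfrac{1}{(x+t+2)R_1}+\lambda\bigr)\le C\bigl(\tfrac{1}{(x+t+2)R_1}+\lambda\bigr)$; since $R_1\ge1$ one has $\log(4R_1)\ge\log4>1>1-\lambda$, so $\tfrac{1}{(x+t+2)R_1}\le\tfrac{\log(4R_1)}{(x+t+2)(1-\lambda)R_1}$, and $B^+_0$ already has the form of the right-hand side of \eqref{BitEst}.

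Next I would sum the tail with \eqref{Bitineq}, which yields
\[
\sum_{n=1}^\infty\abs{B^+_n(x,t)}\le 2C\Bigl(\frac{\log(4R_1)}{(1-\lambda)R_1}+\lambda\Bigr)\sum_{n=1}^\infty\frac{(2Q)^n}{(n-1)!}\Bigl(1-\frac{x+t}{2}\Bigr)^{n-1}.
\]
Reindexing by $m=n-1$ identifies the last sum with $2Q\exp\bigl(2Q(1-\tfrac{x+t}{2})\bigr)$. This is the only place the geometry enters: in the triangle one has $x\ge-1$ and $t\ge x\ge-1$, hence $-2\le x+t\le2$, so $1-\tfrac{x+t}{2}\le2$ and the exponential is at most $e^{4Q}$. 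Therefore $\sum_{n\ge1}\abs{B^+_n(x,t)}\le 4QC\,e^{4Q}\bigl(\tfrac{\log(4R_1)}{(1-\lambda)R_1}+\lambda\bigr)$.

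Finally I would combine the two bounds and collapse the constants. Using $x+t+2\le4$ on the triangle gives $\tfrac{\log(4R_1)}{(1-\lambda)R_1}\le\tfrac{4\log(4R_1)}{(x+t+2)(1-\lambda)R_1}$, so adding the estimates for $B^+_0$ and the tail produces $\abs{B^+(x,t)}\le(C+16QCe^{4Q})\tfrac{\log(4R_1)}{(x+t+2)(1-\lambda)R_1}+(C+4QCe^{4Q})\lambda$; since $e^{4Q}\ge1$, each coefficient is at most $C(1+16Q)e^{4Q}$, which is precisely \eqref{BitEst}. I do not expect any genuine obstacle here: the only points needing care are recognizing the tail series as $2Q\exp(\cdot)$, using $-2\le x+t\le2$ both to bound the exponential and to insert the $(x+t+2)^{-1}$ factor into the tail term, and checking that the final constants collapse to $C(1+16Q)e^{4Q}$.
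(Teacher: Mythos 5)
Your proposal is correct and follows exactly the route the paper intends: split $B^+$ into $B^+_0(x,t)=B^+(-1,x+t+1)$ bounded via \eqref{tempBest} and the tail $\sum_{n\ge1}B^+_n$ summed via \eqref{Bitineq} to $2Q e^{4Q}$ times the prefactor, then absorb constants using $-2\le x+t\le2$. The paper states this as an immediate consequence; your write-up simply supplies the details, with no gaps.
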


\begin{thm}\label{potdiff}
Let $Q_1$, $Q_p$, and $\delta$ be positive constants and $p\in(1,2]$.  Then there are positive numbers $C=C(Q_1,Q_p,\delta)$ and $R_0=R_0(Q_1,Q_p,\delta,p)$ so that the following is true for any $R\ge R_0$.  There is a constant $E=E(Q_1,R)$ such that when $q,\tilde q\in B_\delta(Q_1)$ are two potentials for which $s$ and $\tilde s$ satisfy Assumption~\ref{noshift}, the zeros of $w$ and $\tilde w$ and $s$ and $\tilde s$ are, respectively, $\eps$-close in the disk $\mathcal D(R)$ with $\eps\in[0,E]$, and $\norm{q-\tilde q}_p\le Q_p$, then
\begin{equation*}
\begin{split}
\displaystyle\sup_{x\in[-1,1]}\abs{\int_x^1q(s)-\tilde q(s)\, ds}&\le C(\log R)^{2(p-1)/(2p-1)}R^{-(p-1)^2/(9p(2p-1))}\\
&\qquad+C(p-1)^{-1/p}(\eps^{6/7}R^{11/9}+\eps^{1/7}R^{1/9}).
\end{split}
\end{equation*}

\begin{proof}
Choose $R_0$ and $E$ as in Theorem \ref{fbound}.  Then Corollary \ref{Best1} and Lemma \ref{Bitest} show, with $\mu=(p-1)/9p$ and $\lambda=\eps^{6/7}R^{11/9}+\eps^{1/7}R^{1/9}$, 
\begin{equation*}
\abs{B^+(x,t)}\le C(p-1)^{-1/p}(1+Q_p)\left[\frac{(\log R)^2}{(2+t+x)(1-\lambda)^2R^{\mu}}+\lambda\right].
\end{equation*}
Let $0<\eta<1$.  For $\eta-1\le x\le 1$, 
\begin{equation*}
\abs{\int_x^1q(s)-\tilde q(s)\, ds}\le C(p-1)^{-1/p}(1+Q_p)\left[\frac{(\log R)^2}{\eta R^\mu}+\lambda\right]=\frac {M_1}{\eta}+M_2\lambda.
\end{equation*}
However, for $-1\le x<\eta -1$, H\"older's inequality yields
\begin{align*}
\abs{\int_x^1 q(s)-\tilde q(s)\, ds}&\le \abs{\int_x^{\eta-1}q-\tilde q}+\frac{M_1}{\eta}+M_2\lambda\\
&\le Q_p\eta^{1/p^\prime}+\frac {M_1}{\eta}+M_2\lambda
\end{align*}
where $p^\prime=p/(p-1)$.  Increasing $R_0$ so that $M_1p^\prime<Q_p$, the first two terms of the error balance when $\eta=(M_1p^\prime/Q_p)^{p^\prime/p^\prime +1}<1$.  Finally, we decrease $E$ so that $\lambda\le1/2$ to arrive at the desired inequality.
\end{proof}
\end{thm}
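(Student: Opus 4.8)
The plan is to push the estimate of Corollary~\ref{Best1} through the integral equation for $B^+$ and then read off $\int_x^1(q-\tilde q)$ from the identity $2B^+(x,x)=\int_x^1(q-\tilde q)$ recorded in the outline of Section~\ref{prelim}. First I would choose $R_0$ and $E$ as in Corollary~\ref{fbound}, put $\mu=(p-1)/9p$ and $\lambda=\eps^{6/7}R^{11/9}+\eps^{1/7}R^{1/9}$, and use Corollary~\ref{Best1} to bound $\abs{B^+(-1,t)}$ on $(-1,3]$ by $C(p-1)^{-1/p}(1+Q_p)\min(1,(t+1)^{-1}R_1^{-1}+\lambda)$ with $R_1=(1-\lambda)R^{\mu}/\log R$. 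Then Lemma~\ref{Bitest} controls each iterate $B^+_n$ by a term of a series dominated by $\sum_n(2Q_1)^n/(n-1)!$, which converges, and \eqref{BitEst} sums them to a bound throughout the triangle $\set{(x,t):-1\le x\le t\le 2-x}$ of the form
\begin{equation*}
\abs{B^+(x,t)}\le C(p-1)^{-1/p}(1+Q_p)\left[\frac{(\log R)^2}{(2+x+t)(1-\lambda)^2R^{\mu}}+\lambda\right].
\end{equation*}
Setting $t=x$ gives the same estimate for $\abs{\int_x^1(q-\tilde q)}$, with $2+x+t$ replaced by $2(x+1)$.

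The obstacle is that this bound degenerates as $x\to-1$, so on its own it does not control $\sup_{x\in[-1,1]}$; the role of the extra hypothesis $\norm{q-\tilde q}_p\le Q_p$ is exactly to repair this. I would fix $\eta\in(0,1)$. For $x\ge\eta-1$ the displayed estimate reads $\abs{\int_x^1(q-\tilde q)}\le M_1/\eta+M_2\lambda$, where $M_1=C(p-1)^{-1/p}(1+Q_p)(\log R)^2(1-\lambda)^{-2}R^{-\mu}$ and $M_2=C(p-1)^{-1/p}(1+Q_p)$, and $(1-\lambda)^{-2}\le 4$ once $\lambda\le1/2$. For $-1\le x<\eta-1$ I split $\int_x^1=\int_x^{\eta-1}+\int_{\eta-1}^1$ and bound the short piece by H\"older: $\abs{\int_x^{\eta-1}(q-\tilde q)}\le\norm{q-\tilde q}_p\,\eta^{1/p'}\le Q_p\eta^{1/p'}$ with $p'=p/(p-1)$. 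Hence $\sup_{x\in[-1,1]}\abs{\int_x^1(q-\tilde q)}\le Q_p\eta^{1/p'}+M_1/\eta+M_2\lambda$ for every $\eta\in(0,1)$.

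It remains to optimize in $\eta$. Minimizing $Q_p\eta^{1/p'}+M_1/\eta$ balances the first two terms at $\eta\sim(M_1/Q_p)^{p'/(p'+1)}=(M_1/Q_p)^{p/(2p-1)}$, which is $<1$ once $R_0$ is large enough that $M_1=O((\log R)^2R^{-\mu})$ is small relative to $Q_p$ (also guaranteeing $R_1\ge1$ for Lemma~\ref{Bitest}); the balanced value is $\sim Q_p^{p/(2p-1)}M_1^{(p-1)/(2p-1)}$. Since $\mu\cdot(p-1)/(2p-1)=(p-1)^2/(9p(2p-1))$, this is precisely $C(\log R)^{2(p-1)/(2p-1)}R^{-(p-1)^2/(9p(2p-1))}$, the first term of the claim (the residual power of $(p-1)$ is bounded on $(1,2]$ and absorbs into $C$), while $M_2\lambda$ becomes $C(p-1)^{-1/p}(\eps^{6/7}R^{11/9}+\eps^{1/7}R^{1/9})$ after the final shrinking of $E$ that ensured $\lambda\le1/2$.

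I expect the work in this theorem to be essentially assembly: the substance lies in Theorems~\ref{wdiff}, \ref{sdiff} and~\ref{smallzest} and in the resonance-free-strip estimate (Proposition~\ref{hitrik}) that keeps the products over real-axis zeros bounded. The one genuinely structural point here is that the triangle estimate for $B^+$ is unavoidably singular at the left endpoint of the support, so a bound that is only good away from $x=-1$ must be upgraded to a uniform one; the interpolation against $\norm{q-\tilde q}_p$ together with the optimization in $\eta$ is the device that accomplishes this, and it is also why an $L^1$ bound on $q-\tilde q$ alone would be insufficient. A minor additional care is keeping the two error scales---the $\eps$-scale $\lambda$ and the $R$-scale $(\log R)^2R^{-\mu}$---separate, since they are balanced against different quantities.
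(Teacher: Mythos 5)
Your proposal is correct and follows essentially the same route as the paper: the triangle bound from Corollary~\ref{Best1} and Lemma~\ref{Bitest}, the split of $[-1,1]$ at $x=\eta-1$ with H\"older's inequality against $\norm{q-\tilde q}_p\le Q_p$ near the endpoint, and the optimization in $\eta$ balancing $Q_p\eta^{1/p'}$ against $M_1/\eta$ (your exponent $p/(2p-1)$ equals the paper's $p'/(p'+1)$). The exponent bookkeeping and the final shrinking of $E$ so that $\lambda\le 1/2$ match the paper's argument.
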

The proof of Theorem \ref{potdiff} shows $C=\mathcal O(Q_p^{p/(2p-1)})$ as $Q_p\to 0$.

Since we could have carried out our analysis with $s^+$ instead of $s^-$, we arrive at the following corollary.
\begin{cor}[Conditional Stablitiy]\label{condstab}
Let $q$ and $\tilde q$ be two real-valued potentials with support in $[-1,1]$; let $s_q$ and $s_{\tilde q}$ stand for either $s_q^-$ and $s_{\tilde q}^-$ or $s_q^+$ and $s_{\tilde q}^+$.  Suppose $\norm{q}_p$ and $\norm{\tilde q}_p$ are bounded by $Q_p$ for some $p>1$, the moduli of $s_q(0)$ and $s_{\tilde q}(0)$ are no less than $\delta$, and $\dot s_q(0)/s_q(0)=\dot{s}_{\tilde q}(0)/s_{\tilde q}(0)$.  Then for any $\alpha>0$ there exists a pair $(R,\eps)$, depending only on $\delta$, $Q_p$, $p$, and $\alpha$, such that if the corresponding (left or right) reflection coefficients have zeros and poles differing by at most $\eps$, respectively, in a disk of radius $R$ then
\begin{equation*}
\sup_{x\in[-1,1]}\abs{\int_x^1(q-\tilde q)}\le\alpha.
\end{equation*} 
\end{cor}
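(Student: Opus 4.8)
The plan is to deduce Corollary~\ref{condstab} directly from Theorem~\ref{potdiff}, treating it essentially as a repackaging of the quantitative estimate into an $\eps$-$\delta$ form. First I would observe that the hypotheses of the corollary are exactly those of Theorem~\ref{potdiff}: the assumption that $\norm q_p$ and $\norm{\tilde q}_p$ are bounded by $Q_p$ gives $q,\tilde q\in B_\delta(Q_1)$ with $Q_1$ controlled by $Q_p$ (via H\"older on the interval $[-1,1]$, since the support has length $2$), and it also gives $\norm{q-\tilde q}_p\le 2Q_p$, so one may apply the theorem with $Q_p$ replaced by $2Q_p$. The lower bound $\delta$ on $\abs{s_q(0)}=\abs{w_q(0)}$ is condition (ii) of Definition~\ref{BQ}, and the equality $\dot s_q(0)/s_q(0)=\dot{s}_{\tilde q}(0)/s_{\tilde q}(0)$ is Assumption~\ref{noshift}. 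The remark following Theorem~\ref{potdiff} about the $s^+$ case handles the parenthetical ``left or right'' choice, since the entire argument of Sections~\ref{prelim}--\ref{potentialdiff} goes through verbatim with $s^-$ replaced by $s^+$ (using Lemma~\ref{wands}(ii) to interchange the roles).

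The core of the proof is then a limiting argument on the right-hand side of the estimate in Theorem~\ref{potdiff}. Write that bound as
\begin{equation*}
C(\log R)^{2(p-1)/(2p-1)}R^{-(p-1)^2/(9p(2p-1))}+C(p-1)^{-1/p}\bigl(\eps^{6/7}R^{11/9}+\eps^{1/7}R^{1/9}\bigr),
\end{equation*}
where $C$ depends only on $Q_p$, $\delta$ (and $R_0$ additionally on $p$). The first term tends to $0$ as $R\to\infty$ because the exponent $-(p-1)^2/(9p(2p-1))$ is strictly negative for $p>1$ and the logarithmic factor is subpolynomial; hence I would fix $R$ large enough (and $\ge R_0$) that this term is at most $\alpha/2$. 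Having fixed such an $R$, the second term is a continuous increasing function of $\eps$ vanishing at $\eps=0$, so I would then choose $\eps>0$ small enough (and $\le E=E(Q_1,R)$) that it is at most $\alpha/2$. Summing the two halves gives the claimed bound $\alpha$. The resulting pair $(R,\eps)$ depends only on $\delta$, $Q_p$, $p$, and $\alpha$, as required, because all the constants $C$, $R_0$, $E$ depend only on those quantities.

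The only genuinely delicate point—and the step I would present most carefully—is the bookkeeping that the constants in Theorem~\ref{potdiff} really depend only on the advertised parameters and not on $q,\tilde q$ individually: one must check that $Q_1$ can be taken as a function of $Q_p$ alone (so that $B_\delta(Q_1)$ is a fixed class), and that $E$ depending on $(Q_1,R)$ is acceptable since $Q_1$ and $R$ are by then fixed. A minor wrinkle is that the corollary's hypothesis controls $\norm q_p,\norm{\tilde q}_p$ rather than $\norm{q-\tilde q}_p$ directly, but as noted the triangle inequality bounds the latter by $2Q_p$, which is all Theorem~\ref{potdiff} needs. Everything else is a two-line $\alpha/2+\alpha/2$ estimate, so I do not anticipate any real obstacle beyond this constant-tracking.
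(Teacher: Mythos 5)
Your proposal is correct and takes essentially the route the paper intends: Corollary~\ref{condstab} is an immediate consequence of Theorem~\ref{potdiff} (and its $s^+$ analogue), obtained by fixing $R$ large enough that the $R$-term is at most $\alpha/2$ and then choosing $\eps\le E(Q_1,R)$ small enough that the $\eps$-term is at most $\alpha/2$, with $Q_1$ and $\norm{q-\tilde q}_p$ controlled by $Q_p$ exactly as you say. The only detail worth adding to your constant-tracking is that the corollary allows any $p>1$ while Theorem~\ref{potdiff} requires $p\in(1,2]$, so for $p>2$ one first replaces $p$ by $2$ using H\"older on $[-1,1]$ before applying the theorem.
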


\section{Stability of small perturbations of the zero potential}
In this section we prove stability when $\tilde q\equiv 0$.  Recall that this potential was previously excluded since it has a resonance at zero.  Indeed, since $\tilde w(z)=2iz$, the zero potential only has a resonance at zero.  Furthermore, we have $\tilde s\equiv 0$, $\tilde K^+\equiv 0$, and $\tilde f^+(x,z)=\exp(izx)$.  Thus, we have $B^+=K^+$ and 
\begin{equation}\label{phi1}
\left(\frac{w(z)}{2iz}-1\right)e^{-iz}+\frac{s(z)}{2iz}e^{iz}=f^+(-1,z)-e^{-iz}=\int_{-1}^3K^+(-1,t)e^{izt}\, dt.
\end{equation}

The proof will go much the same as it did before.  First, we estimate $$\abs{w(z)(2iz)^{-1}-2iz}$$ assuming that $w$ has a zero in a small neighborhood of the origin and all its other zeros are large.  We then estimate $\abs{s(z)(2iz)^{-1}}$ for $z\in\real$ from
\begin{equation}\label{unitary}
\abs{s(z)}^2=\abs{w(z)}^2-4z^2
\end{equation}
which is a consequence of Lemma~\ref{wands} parts (i) and (iii).  Note that we do \emph{not} need the zeros of $s$ in this case.  We bound $\abs{f^+(-1,z)-\exp(-iz)}$ from the previous estimates and then apply the results of sections~\ref{transdiff} and \ref{potentialdiff} almost without change to arrive at this result.

\begin{thm}
Let $Q_1$ and $Q_p$ be positive constants and $p\in(1,2]$.  Then there are positive numbers $C=C(Q_1,Q_p)$ and $R_0=R_0(Q_1,Q_p,p)$ so that the following is true for any $R\ge R_0$.  There is a constant $E=E(Q_1,R)$ such that if $q\in L^1(\real)\cap L^p(\real)$ is supported in $[-1,1]$, $\norm{q}_1\le Q_1$, $\norm{q}_p\le Q_p$, and has exactly one eigenvalue or resonance in the disk $\mathcal D(\eps)$ with $\eps\in[0,E]$ and no others in the disk $\mathcal D(R)$, then
\begin{equation*}
\begin{split}
\displaystyle\sup_{x\in[-1,1]}\abs{\int_x^1q(s)\, ds}&\le C(\log R)^{2(p-1)/(2p-1)}R^{-(p-1)^2/(12p(2p-1))}\\
&\qquad+C(p-1)^{-1/p}\sqrt{\eps}R^{1/12}\log R.
\end{split}
\end{equation*}

\end{thm}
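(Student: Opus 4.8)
The plan is to reproduce the three‑step architecture of Sections~\ref{prelim}--\ref{potentialdiff} for the pair $q,\tilde q$ with $\tilde q\equiv0$. Two things make life easier here and one thing makes it harder. Easier: the zero set of $w$ is handed to us explicitly---a single zero $w_1\in\mathcal D(\eps)$ and nothing else in $\mathcal D(R)$---so neither Proposition~\ref{lowbound} nor membership in any $B_\delta(Q_1)$ is needed; and the zeros of $s$ never enter, since $\abs{s(z)}$ on the line is read off directly from \eqref{unitary}, a bound that is also why Assumption~\ref{noshift} is unnecessary (a shift multiplies $s$ by a unimodular factor on $\real$). Harder: \eqref{unitary} returns only the \emph{modulus} of $s$, and the square root this forces is exactly what produces the $\sqrt\eps$ and the weaker powers of $R$ in the conclusion. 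Note that every estimate in Sections~\ref{transdiff}--\ref{potentialdiff} (the Fourier inversion, \eqref{Bagain}, the integral equation for $B^+$, Lemma~\ref{Bitest}) uses only $\norm{q}_1\le Q_1$ and $\norm{q}_p\le Q_p$, never a lower bound on $\abs{w(0)}$, so all of it remains available; and since $\tilde q\equiv0$ gives $\tilde K^+\equiv0$ we have $B^+=K^+$, so \eqref{phi1} plays the role of \eqref{phi}.

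\emph{Step 1 (the difference $w(z)-2iz$).} Write $w(z)=e^{g(z)}(w_1-z)\Pi_w(R,z)$ as in \eqref{wrewrite} (only the factor $w_1-z$ remains, $w$ having a single zero in $\mathcal D(R)$), and note $2iz=-z\,e^{\tilde g}$ with $e^{\tilde g}=-2i$, so that
\begin{equation*}
e^{g(z)-\tilde g(z)}=\frac{w(z)}{2iz}\cdot\frac{z}{z-w_1}\cdot\frac{1}{\Pi_w(R,z)}.
\end{equation*}
On $\overline{\mathcal D}(R^{1/3},3iR^{1/3})$, Lemma~\ref{apriori}(iii) (which depends only on $Q_1$) gives $\abs{w(z)/(2iz)-1}\le\kappa R^{-1/3}$, inequality \eqref{wPi} gives $\abs{\Pi_w(R,z)-1}\le CR^{-1/3}$, and $\abs{w_1/z}\le\tfrac12\eps R^{-1/3}$; hence $e^{g(z)-\tilde g(z)}$ lies within $CR^{-1/3}$ of $1$ there. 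Since $g-\tilde g$ is affine, the Cauchy‑estimate claim proved inside Corollary~\ref{expbound2} transfers this bound to $\overline{\mathcal D}(R^{1/3})$; reinstating the factors $2iz$, $(z-w_1)/z$ and $\Pi_w(R,z)$ then yields
\begin{equation*}
\abs{w(z)-2iz}\le C\bigl(\eps+\abs{z}R^{-1/3}\bigr),\qquad z\in\real,\ \abs{z}\le R^{1/3},
\end{equation*}
with $C=C(Q_1)$.

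\emph{Steps 2--3 ($\abs{s(z)}$ and $f^+(-1,\cdot)-e^{-i\cdot}$).} For real $z$, $\abs{s(z)}^2=\abs{w(z)}^2-4z^2\ge0$ forces $\abs{w(z)}\ge2\abs{z}$, hence
\begin{equation*}
\abs{s(z)}^2=(\abs{w(z)}-2\abs{z})(\abs{w(z)}+2\abs{z})\le\abs{w(z)-2iz}\bigl(\abs{w(z)-2iz}+4\abs{z}\bigr),
\end{equation*}
so Step~1 gives $\abs{s(z)}\le C\bigl(\eps+\sqrt{\eps\abs{z}}+\abs{z}R^{-1/6}\bigr)$ for $\abs{z}\le R^{1/3}$. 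Feeding both estimates into \eqref{phi1},
\begin{equation*}
\abs{f^+(-1,z)-e^{-iz}}\le\frac{\abs{w(z)-2iz}+\abs{s(z)}}{2\abs{z}}\le C\Bigl(\frac{\eps}{\abs{z}}+\sqrt{\frac{\eps}{\abs{z}}}+R^{-1/6}\Bigr),\qquad\abs{z}\le R^{1/3}.
\end{equation*}
The term $\sqrt{\eps/\abs{z}}$---a square root of $\eps$, still integrable near the origin---is the only structural change from Corollary~\ref{fbound}.

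\emph{Step 4 and the main obstacle.} Now invert the Fourier transform as in Section~\ref{transdiff}, with \eqref{phi1} in place of \eqref{phi}, splitting $[-A,A]$ as there but \emph{without} the set $X_3$: there are no zeros of $s$ to dodge, so no auxiliary parameter $\eta$ enters at this stage. On $X_1=[-R^{-1/9},R^{-1/9}]$ the integrand is bounded, contributing $O(R^{-1/9})$; on $X_2=\set{R^{-1/9}\le\abs{z}\le R^{1/9}}$ one integrates the Step~3 bound---and here the finiteness of $\int_{X_2}\abs{z}^{-1/2}\,dz$ is exactly what keeps the $\sqrt\eps$ term harmless; the tail $\set{\abs{z}>R^{1/9}}$ is handled verbatim from Section~\ref{transdiff} through $f^+(-1,z)=e^{-iz}+iz^{-1}K^+(-1,-1)e^{-iz}+iz^{-1}\hat g(z)$ with $g(t)=K^+_t(-1,t)$, together with H\"older's and the Hausdorff--Young inequality. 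This bounds $\abs{K^+(-1,t)}=\abs{K^+(-1,t)-\tilde K^+(-1,t)}$ by $C(p-1)^{-1/p}(1+\norm{q}_p)\min\bigl(1,(t+1)^{-1}R^{-\mu}+\lambda\bigr)$ for a suitable $\mu>0$, with $\lambda$ a constant multiple of $\sqrt\eps$ times a positive power of $R$; passing through \eqref{Bagain}, the iteration estimates of Section~\ref{potentialdiff} (Lemma~\ref{Bitest}, which uses only the $L^1$‑bound), and the final balance of the $\eta^{-1}$ term against a H\"older term $\norm{q}_p\eta^{1/p'}$ exactly as at the close of the proof of Theorem~\ref{potdiff}, one obtains the stated estimate, using $\abs{2K^+(x,x)}=\abs{\int_x^1q}$. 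The bookkeeping is routine; the point that must be watched---and the reason both the exponent of $R$ and the power of $\eps$ come out worse than in Theorem~\ref{potdiff}---is the square root in \eqref{unitary}, which roughly halves the decay gained in Step~1 before it reaches $\abs{s(z)}$ and hence $f^+(-1,\cdot)$; in Theorem~\ref{potdiff} the zeros of $s$ are part of the data, so no square root is taken.
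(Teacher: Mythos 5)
Your proposal follows essentially the same route as the paper's own (only sketched) proof: estimate $\abs{w(z)-2iz}$ from the Hadamard factorization with the single small zero, read off $\abs{s}$ on the real line from \eqref{unitary} (which is exactly where the $\sqrt\eps$ and the weakened powers of $R$ enter, and why neither the zeros of $s$, Assumption~\ref{noshift}, nor membership in $B_\delta(Q)$ is needed), bound $f^+(-1,\cdot)-e^{-i\cdot}$ via \eqref{phi1}, and rerun Sections~\ref{transdiff} and \ref{potentialdiff}. Your bookkeeping is consistent with the stated estimate—indeed the splitting at $R^{\pm1/9}$ gives terms like $\sqrt\eps R^{1/18}$ and $R^{-(p-1)/(9p)}$, which are at least as strong as the exponents claimed in the theorem—so the argument is correct and matches the paper's approach.
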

\vspace{2cm}
\subsection*{Acknowledgement}
I would like to thank my advisor, Rudi Weikard, for the questions answered, critiques given, and mistakes corrected during the research for and preparation of this work.

\bibliographystyle{plain}
\bibliography{myref,/Users/mattbledsoe/Documents/papers/stability}
\end{document}